\newtheorem{theorem}{Theorem}
\newtheorem{lemma}{Lemma}
\theoremstyle{definition}
\newtheorem{definition}{Definition}
\def\BState{\State\hskip-\ALG@thistlm}
\begin{document}
%
\title{Absolute Trust: Algorithm for Aggregation of Trust in Peer-to- Peer Networks}
%
%
%
%

\author{Sateesh~Kumar~Awasthi,
        Yatindra Nath Singh,~\IEEEmembership{Senior~Member,~IEEE,}
}

\IEEEtitleabstractindextext{%
\begin{abstract}
To mitigate the attacks by malicious peers and to motivate the peers to share the resources in peer-to-peer networks, several reputation systems  have been proposed in the past. In most of them, the peers evaluate other peers based on their past interactions and then aggregate this information in the whole network. However such an aggregation process requires  approximations in order to converge at some global consensus. It  may not be the true reflection of past behavior of the peers. Moreover such type of aggregation gives only the relative ranking of peers without any absolute evaluation of their past. This is more significant when all the peers responding to a query, are malicious. In such a situation, we can only know that who is better among them without knowing their rank in the whole network. In this paper, we are proposing a new algorithm which accounts for the past behavior of the peers and will estimate the absolute value of the trust of peers. Consequently, we can suitably identify them as a good peers or malicious peers. Our algorithm converges at some global consensus much faster by choosing suitable parameters. Because of its absolute nature it will equally load all the  peers in network. It will also reduce the inauthentic download in the network which was not possible in  existing algorithms.
\end{abstract}

\begin{IEEEkeywords}
Peer-to-peer Network, Trust, Global Trust, Local Trust, DHT, Non-negative matrix, Eigenvector.
\end{IEEEkeywords}}

\maketitle

\IEEEdisplaynontitleabstractindextext

%
\IEEEpeerreviewmaketitle

\ifCLASSOPTIONcompsoc
\IEEEraisesectionheading{\section{Introduction}\label{sec:introduction}}
\else
\section{Introduction}
\label{sec:introduction}
\fi

%
%
%
%
\IEEEPARstart{F}{or}  exchanging and sharing the information, peer-to peer networks are better because of their inherent advantage of scalability and robustness, as compare to traditional client server model. Every peer in p2p network can initiate the communication and each peer can act both like client as well as server, and has equal responsibility. But due to lack of functionality of central control, some peers can easily sabotage the network by putting inauthentic contents in the network. Such peers are called malicious peers. Furthermore, rational behavior of peers encourage  them only to draw the resources from network without sharing any thing. These types of peer are called free riders. In such a situation, p2p network functions like a poor client server system where only few peers act as server  with  much less upload bandwidth and storage capacity. The success of peer to peer networks largely depends  on the policy by which these two issues can be handled. Many researchers have proposed to implement  a reputation system based on the past behavior of peers in the network. Past behavior is modelled as trust. It is qualitative and difficult to measure in quantity. Hence, suitable metric and algorithms are required to  measure as well as propagate it to all the nodes, so that behavior of peers can be modelled. \par In most of the existing reputation systems, all the peers evaluate the other peers, based on  the past interactions  and assign them some trust value, also called local trust value. These local trust values are basic information, which are aggregated in whole network to form the global reputation of the peer. This aggregation process is different for structured and unstructured p2p network. In structured network, responsibility to manage global reputation through aggregation is distributed among all the peers. It is also called global trust value of peer. With the help of DHT algorithms, such as Chord\cite{chord}, CAN\cite{can}, Pastry\cite{pastry}, Tapestry\cite{tapestry}, the peer managing global trust of a peer can be easily located. In such a network, consensus is estimated by the manager peer. In an unstructured network, each peer evaluate the global trust value of peers by collecting the local trust from different peers through a distributed aggregation algorithm, the aggregation can be done  either by gossiping protocol or by taking feedback only from  few significant peers; however taking feedback only from  few of the  peers does not make the global trust, global in true sense. In both structured and unstructured network, if consensus is taken across the whole network, then local trust needs to be normalized in some way, which results in approximation of global trust. Some times it may not be the true reflection of the past behavior of peers. This type of aggregation gives only the ranking of peers. This ranking system is similar to the  random surfer Model\cite{random} which is based on the popularity of page on the web. But there is a difference between popularity and trustworthiness; a peer can be popular in a network by doing transaction with large number of peers, but may not  be providing good quality of service.  However, for a trustworthy peer, quality of service provided in each transaction will be good.\par Let us understand it by following example. Let there be five peers in a network - A, B, C, D and E. After some interactions they give some local trust value to each other as shown in  table I. After aggregating these local trust values  \cite{eigen} \cite{power},   they are  ranked as B, E, C, D, A; B is most trustworthy and A is least trustworthy. If it is aggregated as in  \cite{flow},   then they are ranked as E, B, D, C, A; E is most trustworthy and A is least trustworthy. But we can see clearly from Table\ref{table1} that A is making two transactions but both transactions are good as compared to any transaction made by B, C and D. So we can not conclude that A is less trustworthy as compared to B, C and D.\par Furthermore ranking of peers does not give any absolute characterization of them. There are many situation where we need absolute characterization of peers. Consider the example when a peer send a request for a particular file and all the responding peers are malicious, then ranking system can only tells us that who is better among them. We will never know whether they are malicious peers or good peers. It can result in inauthentic download in the network. Besides all these facts, relative ranking of peers overload the most reputable peers even if we use the probabilistic approach to select the source peer for download, i.e.,the probability of choosing a download source peer is proportional  to its global trust. Also, the process of normalization method is  message consuming task. For example, if trust assigning peer updates the local trust of any one of its interacting peer, then it needs to update the values of  all the other interacting peers. It will require more messages to communicate the update to the trust holder peers.
\par Keeping in view  all the above points, a reputation system in p2p network must have the following design considerations. 
\begin{itemize}
\item Reputation should be true reflection of past behavior.
\item Reputation must be aggregated in the whole network.
\item The system should be robust to the malicious peers with as many  attackers model as possible.
\item Load Balance: System should not overload only few peers in network.
\item Adaptive to peer dynamics
\item Fast Convergence Speed
\item Lower overhead/message complexity
\item No Central authority
\end{itemize}

\par In this paper, we  propose a metric and an aggregation algorithm which truly capture the past behavior of the peers. The proposed  aggregation algorithm does not require any kind of normalization hence it automatically meets the above design considerations. It is purely decentralized and does not require any kind of central authority or pre-trusted peers or power nodes. The Absolute Trust is based on the concept of weighted averaging and scaling of local trust. It is calculated  recursively in the whole network till it converges. We will show that it will converge at some unique global value and can be calculated distributively in the whole network by all the peers. Our simulation results show that it gives better authentic downloading performance and more uniform load distributions among good peers with lesser message complexity. 
\begin{table}
\begin{center}
\caption{Local trust of peers A, B, C, D and E, zero means there is no interaction between peers till now }\label{table1}
\begin{tabu} to 0.4\textwidth { | X[c] | X[c] | X[c] | X[c] | X[c]| X[c] | } 
 \hline
    & A& B& C&D & E\\[1ex] 
  \hline
  A &0 & 0.6& 0.6 & 0& 0  \\ [1ex] 
 \hline
  B &0.3 & 0& 0.3 & 0.4& 0.4  \\[1ex] 
   \hline
 C &0.4 & 0.4& 0 & 0.2& 0.2  \\[1ex] 
  \hline
 D&0.5 & 0.1& 0.1 & 0& 0.5  \\[1ex] 
  \hline
 E&0.7 & 0.7& 0.8 & 0& 0  \\[1ex] 
 
  \hline
\end{tabu}
\end{center}
\end{table}

 \par Rest of the paper is organized as follows. Section 2 represents the past work done on reputation systems. In section 3, we will define the basic trust model and its aggregating algorithm. Section 4 will be covering the existence and uniqueness of proposed global trust. In section 5, the algorithm is analyzed. Section 6 presents simulation results, and finally in section 7,  conclusion and future work is presented.

\section{Related Work}

Reputation system is used to establish the trust among the buyers in e-commerce e.g. Amazon, Flipkart, Snapdeal, eBay\cite{eBay}. In all such  systems, there is some central authority and it is keeping the record of past experiences of buyers. This experience is used by new buyers for their shopping. Aggregating the feedback in the presence of central authority is simple task, but p2p system is distributed in nature so maintaining and aggregating the trust is not trivial.\par Aberer and Despotovic \cite{Aberer} proposed a trust model in which only complaints are reported if any, otherwise peers are assumed to be trustworthy.   Eigentrust Algorithm \cite{eigen} is based on Random Surfer Model \cite{random}. Pre-trusted peers are required to handle the malicious peers in it. In PeerTrust \cite{peertrust} five different factors are defined for evaluation of trustworthiness of the peers. Both Eigentrust and PeerTrust are based on the concept of weighted average. Fuzzy Trust model\cite{fuzzy} proposed by Song $et$ $ al$. It is  also based on the concept of weighted average, where  weight factor is determined by three variables-- the peer's reputation, the transaction date and the transaction amount. The message complexity in Fuzzy Trust\cite{fuzzy} is lesser than in the Eigentrust \cite{eigen}. PowerTrust \cite{power} is based on assumption of the power law network. In it local trust is aggregated similarly to the  Eigentrust\cite{eigen} except pre-trusted peers are replaced by most reputable peers in the network. These reputable peers are searched and elected dynamically in the network. All above trust models\cite{Aberer}\cite{eigen} \cite{peertrust} \cite{fuzzy} \cite{power} are for structured network and  DHT is used for efficient location of trust holder peers.  \par In unstructured network, global trust is calculated  by floating the query for local trust in the network. The peer, who wants to calculate the global trust, waits for the feedback upto some time. Then the calculation of global trust  is performed with these limited number of  feedback given by some of the  peers. Gossip Trust\cite{gossip} used same metric as in \cite{eigen} and local trust values are gossiped in the network similarly to randomized gossip algorithm in \cite{random}. In Scalable Feedback Aggregation (SFA)\cite{sfa}, the trustworthiness is calculated by weighted average of local trust and feedback taken by few of the  peers. Antonino $et$ $ al.$ proposed a flow-based reputation\cite{flow} which is modified version of \cite{eigen}. It is only for centralize systems. Wang and Vassileva proposed a Bayesian Trust Model\cite{baysian} in which, different aspect of peer behavior are modelled in different situations. Damiani $et$ $ al$ proposed a system\cite{servent} for managing and sharing the servent's  reputation in which peers poll other peers by broadcasting a request for opinion. In another similar approach\cite{xreputation} Damiani $et$ $ al$. considered the reputation of both peers and resources, but credibility of voter was not considered in both the approaches .
\section{Proposed Trust Model}
In this model of peer to peer network, the peers are assumed to exchange only the files as the resource . With suitable modification, the same model can also be used for other kind of resources. We will define the basic trust metric, namely local trust, which is the raw data used for the calculation of  global trust, which the trust, system as a whole keeps on an individual peer.  Later we will give an algorithm for the aggregation of the local trusts in the whole network to generate global trust value.
\subsection{Local Trust}
Typically peer's satisfaction after a transaction can be classified as satisfied, neutral or unsatisfied. We can also define  many other levels, but for simplicity only three levels have been assumed. Let peer $i$ download some files from peer $j$, then peer $i$ can assign a local trust value to peer $j$ as 

\[T_{ij} = \frac{n_g w_g + n_n w_n + n_b w_b}{n_t}.\] 
\vspace{2mm}
where, \\
      $n_g =$  Number of satisfactroy files,\\
    \hspace{3mm} $n_n =$ Number of average or neutral files,\\
    \hspace{3mm}$n_b =$ Number of unsatisfactroy files,\\           
\hspace{3mm} $n_t = $ Total Number of downloaded files,\\
  \hspace{3mm} $w_g = $ Weigtht factor for  satisfactroy files,\\
  \hspace{3mm}$w_n =$ Weight factor for average or neutral files,\\
 \hspace{3mm} $w_b =$ Weight factor for unsatisfactroy files,\\	
 
 Let us assume that the variation of weight factor varies  linearly from unsatisfactory file to satisfactory file, then
 \[w_n =\frac{w_g + w_b}{2}.\] On simplification,  
 \[T_{ij} =\big[\frac{n_g w_g + (n_t - n_g - n_b)\frac{(w_g+w_b)}{2} + n_b w_b}{n_t}\big]\]
 
 \[T_{ij} = \frac{1}{2 n_t}[(n_g - n_b + n_t )w_g + ( n_b - n_g + n_t ) w_b ]\]

\begin{equation}\label{equ1}
T_{ij} =\frac{1}{2}[(x-y+1)w_g + (y-x+1)w_b ]
\end{equation}

 where,\\
\hspace{3mm}  $x=$  Fraction of satisfactory files, and \\
\hspace{3mm}  $y= $  Fraction of unsatisfactory files.\\
  \par This metric will ensure that local trust value will remain between \(w_g\) and \(w_b\). For example  if peer $i$ download 100 files from peer A and B, and A provide 20 satisfactory files, 40 unsatisfactory files and rest average files  while peer B provide 30 satisfactory files, 60 unsatisfactory files and rest average files; assuming that weight factor of good file is 10 and that is for bad file is 1, then \(T_{iA}=4.6\) and \(T_{iB}=4.15\) \par Many author argue that there are many other factors which can influence the local trust value like amount of transactions, date of transactions, number of transactions etc.\cite{peertrust}\cite{fuzzy}\cite{sfa}\cite{sort}. We  agree with their arguments which can also be considered in our case. But in all the cases, the aggregation process will remain same. In case of free riding, one can define the metric for local trust in many different ways. In the next section, above issues are not discussed and we focus only on the information aggregation in the network. 
  \subsection{Absolute Trust: Algorithm for Aggregation}
 In any evaluation process, there are two parties, one who is evaluating; we will call it the evaluator, and the one who is being evaluated; we will call it the evaluatee. Reliability of evaluation depends on who is evaluating, and it varies from person to person. It is said to be more reliable if it is done by a competent evaluator.\par
 
\vspace{3mm}
\begin{figure}
\centering
\subfloat[One-to-Many]{
 \begin{tikzpicture}[
 roundnode/.style={circle, draw=green!60, fill=green!5, very thick, minimum size=7mm}']
 \draw (5.5,3)circle(.25);
 \draw (4.5,6)circle(.25);
 \draw (5.5,6)circle(.25);
 \draw (6.5,6)circle(.25);
 \draw (5.5,3,0)--(4.5,6);
 \draw (5.5,3,0)--(5.5,6);
 \draw (5.5,3,0)--(6.5,6);
 \end{tikzpicture}
}
\subfloat[Many-to-One]{
\begin{tikzpicture}[
 roundnode/.style={circle, draw=green!60, fill=green!5, very thick, minimum size=7mm}']
 \draw (2,3)circle(.25);
 \draw (3,3)circle(.25);
 \draw (4,3)circle(.25);
 \draw (3,6)circle(.25);
 \draw (2,3,0)--(3,6);
 \draw (3,3,0)--(3,6);
 \draw (4,3,0)--(3,6);
 \end{tikzpicture}
 }
  
\subfloat[One-to-One]{
\begin{tikzpicture}[
 roundnode/.style={circle, draw=green!60, fill=green!5, very thick, minimum size=7mm}']
 \draw (7.5,3)circle(.25);
 \draw (8.5,3)circle(.25);
 \draw (9.5,3)circle(.25);
 \draw (7.5,6)circle(.25);
 \draw (8.5,6)circle(.25);
 \draw (9.5,6)circle(.25);
 \draw (7.5,3,0)--(7.5,6);
 \draw (8.5,3,0)--(8.5,6);
 \draw (9.5,3,0)--(9.5,6);
  \end{tikzpicture}
 }
 \caption{Different ways of evaluation}
 \end{figure}
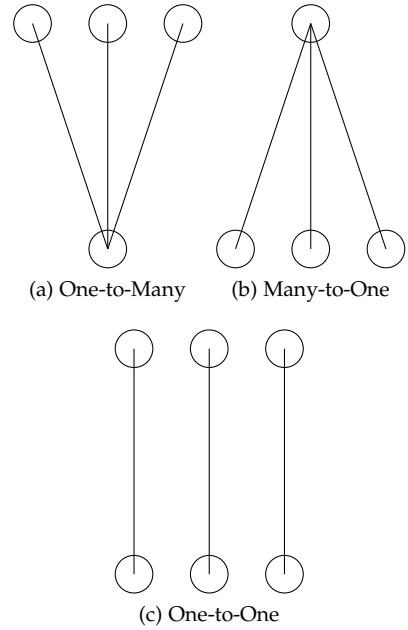 
 
   There are three different scenario in the evaluation as shown in figure 1. One-to-many: one person is evaluating  many persons; many-to-one: many persons are evaluating  one person; and one-to-one: one person is evaluating another person. In one-to-many scenario, since evaluation is done by only one person, the evaluation can be considered to be uniform. Further the evaluated metric can be linearly scaled up or down. In many-to-one evaluation since one person is evaluated by many persons so there are chances of contradictions. At the same time,  the opinion of any evaluator can not be ignored. Thus, the best way to resolve the contradiction is to take the weighted average of all the evaluators'  opinions, while assigning more weight to a more competent evaluator. In one-to-one evaluation, there is no direct comparison of two evaluations because the evaluator and evaluatee both are different. In order to compare these evaluations it is essential to make them uniform with respect to evaluator. Again based on the concept that competent evaluator's evaluation will be more accurate, we can bias these evaluation by a weight factor which must be proportional in some sense  to the  competence of evaluator. This bias can be given by 
 \begin{equation}\label{equ0}
  Eval\_uniform\_out = [ (Eval\_value\_in)^p . ( w_e )^q ]^\frac{1}{p + q}   
 \end{equation}
 where, $Eval\_value\_in$ is evaluation done by an individual evaluator, $w_e$ is weight factor assigned to this evaluator,  $Eval\_uniform\_out$ is output uniform evaluation and $p$  ,  $q$  are suitably chosen constants.
 \par If $p=q$, then $Eval\_uniform\_out$ is geometric mean of $w_e$ and $Eval\_value\_in$. If we take $q=\alpha$.$p$ then \[Eval\_uniform\_out = [ (Eval\_value\_in) . ( w_e )^\alpha ]^\frac{1}{1 + \alpha}\] The impact of $w_e$ and $\alpha$   can be seen in figure 2(a), (b) and (c). In these figures, we can see that $Eval\_uniform\_out$ increases faster with increasing $Eval\_uniform\_in$ for higher values of $w_e$ and for lower values of $\alpha$. The transformation suppresses the reputation reported by less reputed peers, because the suppression is higher for lesser weight factor $w_e$. Also $Eval\_uniform\_out$ is monotonically increasing with $Eval\_uniform\_in$.\par 

\begin{figure}
\centering 
\subfloat[$\alpha$ = 1/2]{
\begin{tikzpicture}
\begin{axis}[
    axis lines = left,
    xlabel = $Eval\_value\_in$,
    ylabel = {$Eval\_uniform\_out$},
    legend pos=north west,
    domain=1:10, 
    samples=10,
   ]
\addplot 
{(x^2 * 1^1)^(1/3)};
\addlegendentry{$w_e=1$}
\addplot
{(x^2 * 2^1)^(1/3)};
\addlegendentry{$w_e=2$}
\addplot
{(x^2 * 3^1)^(1/3)};
\addlegendentry{$w_e=3$}
\addplot
{(x^2 * 4^1)^(1/3)};
\addlegendentry{$w_e=4$}
\addplot
{(x^2 * 5^1)^(1/3)};
\addlegendentry{$w_e=5$}
\addplot
{(x^2 * 6^1)^(1/3)};
\addlegendentry{$w_e=6$}
\addplot
{(x^2 * 7^1)^(1/3)};
\addlegendentry{$w_e=7$}
\addplot
{(x^2 * 8^1)^(1/3)};
\addlegendentry{$w_e=8$}
\addplot
{(x^2 * 9^1)^(1/3)};
\addlegendentry{$w_e=9$}
\addplot
{(x^2 * 10^1)^(1/3)};
\addlegendentry{$w_e=10$}
\end{axis}
\end{tikzpicture}
}

\subfloat[$\alpha$ = 1/3]{
\begin{tikzpicture}
\begin{axis}[
    axis lines = left,
    xlabel = $Eval\_value\_in$,
    ylabel = {$Eval\_uniform\_out$},
    legend pos=north west,
    domain=1:10, 
    samples=10,
     ]

\addplot 
{(x^3 * 1^1)^(1/4)};
\addlegendentry{$w_e=1$}
\addplot
{(x^3 * 2^1)^(1/4)};
\addlegendentry{$w_e=2$}
\addplot
{(x^3 * 3^1)^(1/4)};
\addlegendentry{$w_e=3$}
\addplot
{(x^3 * 4^1)^(1/4)};
\addlegendentry{$w_e=4$}
\addplot
{(x^3 * 5^1)^(1/4)};
\addlegendentry{$w_e=5$}
\addplot
{(x^3 * 6^1)^(1/4)};
\addlegendentry{$w_e=6$}
\addplot
{(x^3 * 7^1)^(1/4)};
\addlegendentry{$w_e=7$}
\addplot
{(x^3 * 8^1)^(1/4)};
\addlegendentry{$w_e=8$}
\addplot
{(x^3 * 9^1)^(1/4)};
\addlegendentry{$w_e=9$}
\addplot
{(x^3 * 10^1)^(1/4)};
\addlegendentry{$w_e=10$}

\end{axis}
\end{tikzpicture}
}

\subfloat[$\alpha$ = 1/4]{
\begin{tikzpicture}
\begin{axis}[
    axis lines = left,
    xlabel = $Eval\_value\_in$,
    ylabel = {$Eval\_uniform\_out$},
    legend pos=north west,
    domain=1:10, 
    samples=10,
     ]

\addplot 
{(x^4 * 1^1)^(1/5)};
\addlegendentry{$w_e=1$}
\addplot
{(x^4 * 2^1)^(1/5)};
\addlegendentry{$w_e=2$}
\addplot
{(x^4 * 3^1)^(1/5)};
\addlegendentry{$w_e=3$}
\addplot
{(x^4 * 4^1)^(1/5)};
\addlegendentry{$w_e=4$}
\addplot
{(x^4 * 5^1)^(1/5)};
\addlegendentry{$w_e=5$}
\addplot
{(x^4 * 6^1)^(1/5)};
\addlegendentry{$w_e=6$}
\addplot
{(x^4 * 7^1)^(1/5)};
\addlegendentry{$w_e=7$}
\addplot
{(x^4 * 8^1)^(1/5)};
\addlegendentry{$w_e=8$}
\addplot
{(x^4 * 9^1)^(1/5)};
\addlegendentry{$w_e=9$}
\addplot
{(x^4 * 10^1)^(1/5)};
\addlegendentry{$w_e=10$}

\end{axis}
\end{tikzpicture}
} 
\caption{Transformation curve, taking evaluated value of trust as a input and uniform evaluated trust value as a output shown for different values of $\alpha$}

\end{figure}
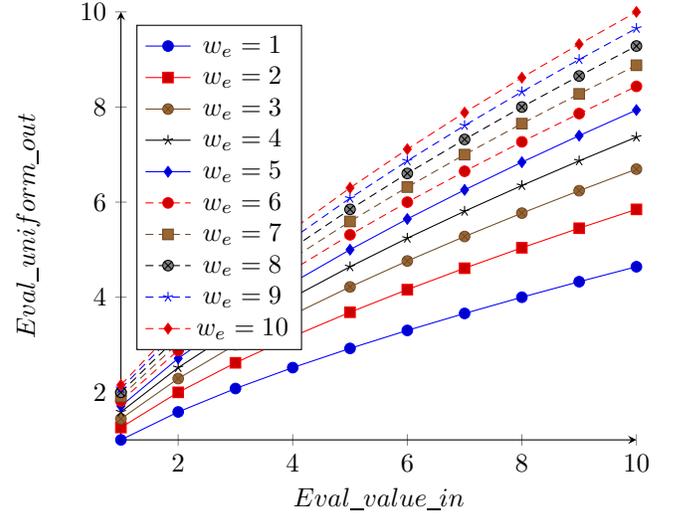
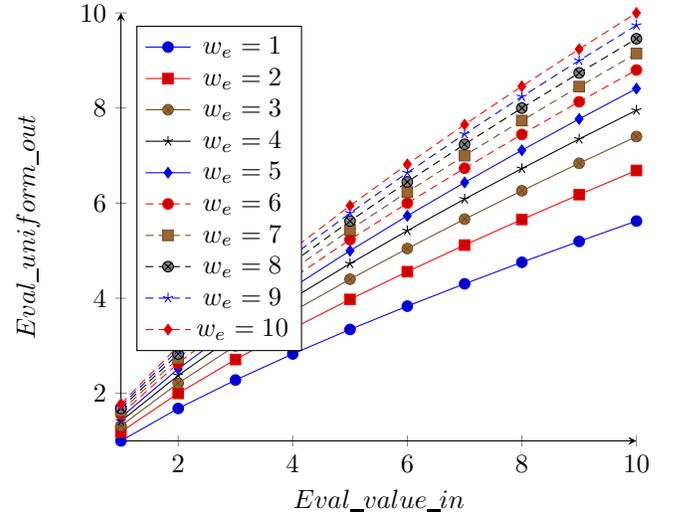
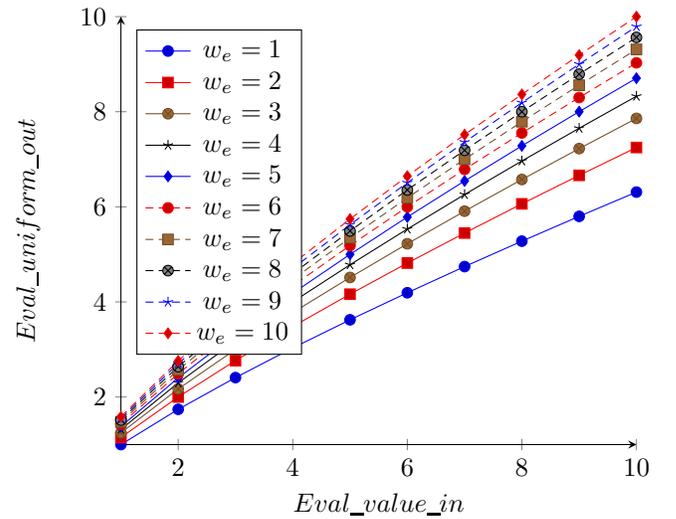
 
   Now let there be $N$ peers in the network, and they are interacting with each other. If any peer $i$ takes the service of any peer $j$ then $i$ can evaluate $j's$ trust according to equation  \ref{equ1}. Each $i$ can evaluate all such $j$ independently and there is no need of any modification in them, because it is one-to-many evaluation. We are aggregating the  values of these one-to-many evaluations (local trust) resulting in estimate of  absolute global trust.\par

 \vspace{3mm}
 \par Each peer is also providing  services to many other peers, and  is being evaluated by them. This is many-to-one evaluation. The aggregated trust values after this step will be weighted average of all the local trust  estimates. The weight factor can be chosen in many different ways, but global trust of an individual peer will be best choice to be used as a weight. Many authors argue that a good service provider may not be a good feedback provider \cite{peertrust}\cite{sfa}\cite{sort}. But we argue that until peers are not in the competition, a good service provider will be most likely a good feedback provider. So we have taken global trust of peers as the weight factor for the purpose of  aggregation of local trust. Hence global trust, $ t_i$ of any peer $i$ is given by

 \begin{equation}\label{equ2}
t_i=\frac{\sum_{j\in S_i}{T_{ji}t_j}}{\sum_{j\in S_i}{t_j}} \hspace{12mm} \forall i.
\end{equation}
  Here, $S_i$ is a set of peers getting services from peer $i$, $T_{ji}$ is  local trust of peer $i$ evaluated by peer $j$, $t_j$ is global trust of peer $j$. Equation \ref{equ2} can be rearranged as 
  \[ t_i=\frac{\sum_{j\in S_i}{T_{ji}t_j}}{\mathbf{ e_i.C.t}}\]
  \[\hspace{13mm} =\sum_{j\in S_i}{(\mathbf{e_iCt})^{-1}T_{ji}t_j}\]
   \par These set of $N$ equations can be written in the form of matrix as 
  \[\mathbf{t=[diag(e_1Ct,e_2Ct,.....e_NCt)]^{-1}T^tt}\]
  where, $\mathbf{t}$ is global reputation vector, $ \mathbf{T}$ is trust matrix, its \(T_{ij}\) element is local trust value of peer $j$ assigned by peer $i$. The element \(T_{ij}\) is zero if there is no interaction among peer $i$ and peer $j$, $\mathbf{e_i}$ is row vector with $i^{th}$ entry as 1 and all others are zero, $\mathbf{C}$ is incidence matrix corresponding to $\mathbf{T^t}$ i.e. if $T_{ji}>0$, then $C_{ij}=1$, else $C_{ij}=0$. $\mathbf{T^t}$ is transpose of matrix $\mathbf{T}$.
\par It is clear from equation  \ref{equ2} that value of \(t_i\) will remain between minimum and maximum value of local trust  given by peers belonging to $S$. \par
Now in the whole network every peer  is evaluated by a different set. If the sets can be  represented equivalently as a single peer, then it is same as  one-to-one evaluation. This evaluation can be made uniform using equation \ref{equ0}. To give  the equivalent global trust of the set, consider a set $S$ of $m$ peers with global trust values \(t_1,t_2......t_m\). The global trust of the set must be dominated by the more trustworthy peers because we are giving more weight to their opinion. With the notion of weighted average, intuitively, we can define the global trust of the set as 
  \begin{equation}\label{equ3}
t_s=\frac{\sum_{j\in S}{t_{j}^2}}{\sum_{j\in S}{t_j}}.
\end{equation}
This equation is similar to equation \ref{equ2}. Here, we are ensuring that global trust of a set will be dominated by the peers having higher global trust value. It will always be in between the minimum and maximum values of global trust of the members of set $S$. The global trust, $t_i$, of a peer $i$,  can be biased by the global trust, \(t_{s_i}\), of trust assigning set $S_i$ according to equation \ref{equ0}, then the modified global trust of peer $i$ can be written as 

\[t_i= [{t_i}^p . {t_{s_i}}^q]^\frac{1}{(p+q)}\hspace{1mm} ;\]

\begin{equation}\label{equ4}
t_i=\Bigg[\Bigg(\frac{\sum_{j\in S_i}{T_{ji}t_j}}{\sum_{j\in S_i}{t_j}}\Bigg)^p . \Bigg(\frac{\sum_{j\in S_i}{t_{j}^2}}{\sum_{j\in S_i}{t_j}}\Bigg)^q\Bigg]^\frac{1}{(p+q)}
\end{equation}

Equation \ref{equ4} will be the true reflection of past behavior of peer $i$ in the whole system. This equation will give us the absolute interpretation of global trust value of any  peer. We have made it uniform by using a biasing factor $t_s$. We can now directly compare the global trust values of any two peers. Equation \ref{equ4} can be rearranged as
\[t_i=\Bigg[\Bigg(\frac{\sum_{j\in S_i}{T_{ji}t_j}}{\sum_{j\in S_i}{t_j}}\Bigg). \Bigg(\frac{\sum_{j\in S_i}{t_{j}^2}}{\sum_{j\in S_i}{t_j}}\Bigg)^{q/p}\Bigg]^\frac{1}{(1+q/p)}\]
\hspace{5mm}\[=\Bigg[\Bigg(\frac{(\sum_{j\in S_i}{t_{j}^2})^{q/p}}{(\sum_{j\in S_i}{t_j})^{(1+q/p)}}\Bigg).\Bigg({\sum_{j\in S_i}{T_{ji}t_j}}\Bigg)\Bigg]^\frac{1}{(1+q/p)}\] 
\hspace{3mm}\[=\Bigg[\Bigg(\frac{{(\mathbf{e_i. C. diag(t) .t})}^\alpha} {{(\mathbf{e_i. C. t}})^{(1+\alpha)}}\Bigg) .\Bigg({\sum_{j\in S_i}{T_{ji}t_j}}\Bigg)\Bigg]^\frac{1}{(1+\alpha)}\] 
\hspace{3mm}\[=\Bigg[\sum_{j\in S_i}\Bigg(\frac{{(\mathbf{e_i. C. diag(t) .t})}^\alpha} {{(\mathbf{e_i. C. t}})^{(1+\alpha)}}\Bigg) .\Bigg({{T_{ji}\Bigg) .\Bigg(t_j}}\Bigg)\Bigg]^\frac{1}{(1+\alpha)}\] 

 There are  $N$ nodes in the network, $i=1,2,.....N$, so these set of $N$ equations can be written in the form of matrix as follows \par 
 \[\mathbf{t}=(\mathbf{D . T^t . t})^\frac{1}{1+\alpha}.\]
  $\mathbf{D}$ is a diagonal matrix, with its $i^{th}$ element $d_i$ as $\big[ \frac{{(\mathbf{e_i. C. diag(t) .t})}^\alpha} {{(\mathbf{e_i. C. t}})^{(1+\alpha)}}\big]$, $\mathbf{diag(t)}$ is $NXN$ diagonal matrix, with its $ii^{th}$ element as $t_i$ and  $\alpha = q/p. $ Rest all have same meaning as mentioned above. Power of the vector is defined as the power of its individual element. \par
 
 In this set, there are $N$ unknowns and $N$ non-linear equations, hence we can not state any thing directly about the solution of these equations. However we will show in next section that there exist a unique positive global trust vector, corresponding to these set of equations. We can find the solution iteratively. In each iteration, we are taking a wider view of global trust of any peer $i$ in the network. In subsection 5.1, speed of convergence has been discussed. We can see directly from figure \ref{converge} that the convergence is  more rapid in the initial iterations. This shows that, in the calculation of global trust we are giving more weightage to the one hop neighbors and the weightage decreases as hop counts are increasing.
\section{Existence and uniqueness of Global trust}

We are proposing following lemmas and theorems to show the existence and uniqueness of global trust vector. Few definitions which will be used in this section.\\
\begin{definition}
 A vector $\mathbf{v}$ or matrix $\mathbf{M}$ is said to be positive/nonnegative if its each element $v_i$ or $M_{ij}$ is positive/nonnegative and real.\\
\end{definition}

\begin{definition}
 A vector $\mathbf{v^{'}}$/ matrix $\mathbf{M^{'}}$ is said to be less than  $\mathbf{v^{''}}$/$\mathbf{M^{''}}$ if its each element $v_i^{'}$/$M_{ij}^{'}$ is less than $v_i{''}$/$M_{ij}{''}$. \\
\end{definition}

\begin{lemma}\label{lemma1} 
Let $\mathbf{z}$ be a positive vector in $\mathbb{R}^N$, such that $\mathbf{z}=\mathbf{f(t)}$, with its $i^{th}$ element as $\bigg[\frac {{t_i(\mathbf{e_i C t}})}{(\mathbf{e_i C. diag(t) .t})}\bigg]^{\alpha}.t_i $.  Then 
$\exists$ at least one pair of positive vectors $\mathbf{t^{'}}$ and $\mathbf{t^{''}}$ such that, if $\mathbf{t^{''}}> \mathbf{t^{'}}$, then  \[\mathbf{f(t^{''})}>\mathbf{f(t^{'})}\] where $\alpha$ is an arbitrary rational number.
\end{lemma}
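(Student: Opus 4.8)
The plan is to exploit a scaling (homogeneity) property of $\mathbf{f}$ rather than attempt to prove full monotonicity, which in fact fails for this map. First I would rewrite the $i$-th component of $\mathbf{f}$ in summation form, using $\mathbf{e_i C t}=\sum_{j\in S_i}t_j$ and $\mathbf{e_i C}\,\mathbf{diag(t)}\,\mathbf{t}=\sum_{j\in S_i}t_j^2$, so that
\[
f_i(\mathbf{t})=\left[\frac{t_i\sum_{j\in S_i}t_j}{\sum_{j\in S_i}t_j^2}\right]^{\alpha}t_i .
\]
Since $\mathbf{t}$ is positive and each $S_i$ is nonempty, the bracketed quantity is a well-defined positive real for every rational $\alpha$, and therefore $\mathbf{f}$ maps positive vectors to positive vectors.

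Next I would establish that $\mathbf{f}$ is homogeneous of degree one. Replacing $\mathbf{t}$ by $c\mathbf{t}$ with $c>0$, the numerator $t_i\sum_{j\in S_i}t_j$ scales as $c^2$ and the denominator $\sum_{j\in S_i}t_j^2$ also scales as $c^2$, so the bracket is scale invariant (degree zero); raising it to the power $\alpha$ keeps it degree zero, and the remaining factor $t_i$ contributes the single power of $c$. Hence $f_i(c\mathbf{t})=c\,f_i(\mathbf{t})$ for every $i$, i.e. $\mathbf{f}(c\mathbf{t})=c\,\mathbf{f}(\mathbf{t})$. I would note that the sign of $\alpha$ is irrelevant here because the bracket is strictly positive.

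With homogeneity in hand, the lemma follows by an explicit construction. I would fix any positive vector $\mathbf{t'}$ and set $\mathbf{t''}=c\mathbf{t'}$ for some constant $c>1$. Then $\mathbf{t''}>\mathbf{t'}$ componentwise, because every entry of $\mathbf{t'}$ is strictly positive and $c>1$; and by degree-one homogeneity $\mathbf{f}(\mathbf{t''})=c\,\mathbf{f}(\mathbf{t'})>\mathbf{f}(\mathbf{t'})$, again since $\mathbf{f}(\mathbf{t'})$ is positive and $c>1$. This exhibits the required pair and completes the argument; no fixed-point or topological machinery is needed at this stage.

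The only real subtlety, and the main point to flag, is that the statement must not be read as global monotonicity of $\mathbf{f}$. Raising a single coordinate $t_k$ with $k\in S_i$ increases both $\sum_{j\in S_i}t_j$ and $\sum_{j\in S_i}t_j^2$, so the bracket can move in either direction, and an unrestricted implication of the form $\mathbf{t''}>\mathbf{t'}\Rightarrow\mathbf{f}(\mathbf{t''})>\mathbf{f}(\mathbf{t'})$ is false in general. The existential phrasing is therefore essential, and the scaling construction above is the natural and simplest witness for it.
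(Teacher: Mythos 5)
Your proof is correct, and it rests on the same underlying mechanism as the paper's --- exhibiting a pair related by positive scaling --- but it packages that mechanism more generally. The paper's own proof simply takes $\mathbf{t^{'}}=a\mathbf{e}$ and $\mathbf{t^{''}}=b\mathbf{e}$ with $b>a>0$ and $\mathbf{e}$ the all-ones vector; with $m$ the number of ones in row $i$ of $\mathbf{C}$, the bracket collapses to $\frac{a(ma)}{ma^2}=1$, so $f_i(\mathbf{t^{'}})=a$, $f_i(\mathbf{t^{''}})=b$, and $b>a$ finishes the argument. That is exactly the special case of your construction in which the base vector is constant (your $c$ playing the role of $b/a$). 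What your route buys is the cleaner structural fact $\mathbf{f}(c\mathbf{t})=c\,\mathbf{f}(\mathbf{t})$ for every positive $\mathbf{t}$ and every rational $\alpha$, so that any positive base point works; what the paper's specific choice buys is the sharper pointwise statement that constant vectors are mapped to themselves, $f_i(a\mathbf{e})=a$ exactly, which is the form implicitly reused in the existence proof (Theorem~\ref{2}) to produce a $\mathbf{t^{'}}$ with $f_i(\mathbf{t^{'}})\le \min(T_{ij}>0)$ and a $\mathbf{t^{''}}$ with $f_i(\mathbf{t^{''}})\ge \max T_{ij}$ --- though your ray $c\mapsto c\,\mathbf{t^{'}}$ serves that purpose equally well, since $f_i(c\mathbf{t^{'}})=c\,f_i(\mathbf{t^{'}})$ sweeps all of $(0,\infty)$. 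Your closing caveat is also correct and worth stating: raising a single coordinate $t_k$, $k\in S_i$, increases the denominator $\sum_{j\in S_i}t_j^2$ quadratically but the numerator only linearly, so $\mathbf{f}$ is not monotone on the positive orthant and the lemma can only be read existentially, which is precisely what both your construction and the paper's establish.
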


\begin{proof}
Let us consider two vectors $\mathbf{t^{'}}=a.\mathbf{e}$ and $\mathbf{t^{''}}=b.\mathbf{e}$.  Where $\mathbf{e}$ is a vector with all elements as '1', $a$ and $b$ are scalar such that $b>a>0$. Then $i^{th}$ element of vector $\mathbf{f(t^{'})}$ 
   \[f_i(t{'})=\bigg[\frac {{t^{'}_i(\mathbf{e_i C t^{'}}})}{(\mathbf{e_i C. diag(t^{'}) .t^{'}})}\bigg]^{\alpha}.t^{'}_i\]
\[f_i(t{'})=\bigg[\frac {{a(m.a})}{(m.a^2)}\bigg]^{\alpha}.a\]
\[f_i(t{'})=a\] here $m$ is number of $'1'$ in $i^{th}$ row of incidence matrix $\mathbf{C}$. \\Similarly   \[f_i(t{''})=b\] hence $\exists$  a pair of positive vectors $\mathbf{t^{'}}$ and $\mathbf{t^{''}}$ satisfying the condition. 

\end{proof}

\begin{lemma}\label{lemma2}
Let $\mathbf{A}$ and $\mathbf{B}$ be $NXN$ non negative, irreducible matrices with spectral radius '1', and corresponding eigen vector $\mathbf{v}$. Then for any vector $\mathbf{x}$; having at least one component along vector $\mathbf{v}$.
\[\lim_{k\to\infty}(\mathbf{M_1.M_2.M_3........M_k)x} = c.\mathbf{v}\]
Here $\mathbf{M_i}$ can be $\mathbf{A}$ or $\mathbf{B}$ for all $i$ from 1 to $k$ and $c$ is any scalar. $\mathbf{A}$ and $\mathbf{B}$ are such that $(\mathbf{M_1.M_2.M_3........M_k})$ is also irreducible.
\end{lemma}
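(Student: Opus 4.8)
The plan is to work through the order structure induced by the common Perron eigenvector $\mathbf{v}$ rather than through any spectral decomposition, since $\mathbf{A}$ and $\mathbf{B}$ need not share an invariant complement to $\mathbf{v}$. First I would record what Perron--Frobenius supplies: because $\mathbf{A}$ and $\mathbf{B}$ are non-negative, irreducible and have spectral radius $1$, the eigenvector $\mathbf{v}$ attached to eigenvalue $1$ is (up to scale) strictly positive and is fixed by both maps, i.e. $\mathbf{A}\mathbf{v}=\mathbf{v}$ and $\mathbf{B}\mathbf{v}=\mathbf{v}$. Normalizing so that every entry of $\mathbf{v}$ is positive, the ratios $x_i/v_i$ are well defined for any non-negative $\mathbf{x}$, and these ratios are the quantities I would track.

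The central device is the pair of functionals $M(\mathbf{x})=\max_i x_i/v_i$ and $m(\mathbf{x})=\min_i x_i/v_i$. The elementary observation driving everything is that, since $\mathbf{M}_j\mathbf{v}=\mathbf{v}$ and $\mathbf{M}_j\ge 0$, each coordinate of $\mathbf{M}_j\mathbf{x}$ is a convex combination of the ratios $x_\ell/v_\ell$: one has $(\mathbf{M}_j\mathbf{x})_i/v_i=\sum_\ell \big((\mathbf{M}_j)_{i\ell}v_\ell/v_i\big)(x_\ell/v_\ell)$, where the weights $(\mathbf{M}_j)_{i\ell}v_\ell/v_i$ are non-negative and, because $\mathbf{M}_j\mathbf{v}=\mathbf{v}$, sum to $1$. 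Hence $m(\mathbf{x})\le m(\mathbf{M}_j\mathbf{x})\le M(\mathbf{M}_j\mathbf{x})\le M(\mathbf{x})$, so applying any single $\mathbf{A}$ or $\mathbf{B}$ never increases the spread $M-m$. Writing the iteration as a sequence of single-matrix applications $\mathbf{y}_0=\mathbf{x}$, $\mathbf{y}_n=\mathbf{M}_{(n)}\mathbf{y}_{n-1}$, this makes $m(\mathbf{y}_n)$ non-decreasing and $M(\mathbf{y}_n)$ non-increasing, both bounded and therefore convergent, while $\mathbf{y}_n$ stays sandwiched between $m(\mathbf{y}_n)\mathbf{v}$ and $M(\mathbf{y}_n)\mathbf{v}$; thus the spread converges to some $\delta\ge 0$.

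The substance of the lemma is to prove $\delta=0$, i.e. that the contraction is strict in the long run, and this is the step I expect to be the main obstacle. The convex combination above is a \emph{strict} pull-in only when the relevant row reaches a coordinate whose ratio differs from the current extreme, and a single irreducible matrix guarantees such reach only after enough steps (and, if it were merely periodic, never strictly). This is exactly why the hypothesis that every partial product $\mathbf{M}_1\cdots\mathbf{M}_k$ is again irreducible is invoked: I would use it to produce an integer $s$ and a factor $\lambda<1$ such that every block of $s$ consecutive factors maps the non-negative cone into its interior (all entries strictly positive), so that each output coordinate is a convex combination genuinely mixing the largest and smallest ratios. Since there are only finitely many blocks of length $s$ over the alphabet $\{\mathbf{A},\mathbf{B}\}$, the per-block contraction factor can be taken uniform, giving $M(\mathbf{y}_n)-m(\mathbf{y}_n)\le\lambda^{\lfloor n/s\rfloor}\big(M(\mathbf{x})-m(\mathbf{x})\big)\to 0$. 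The delicate part is ruling out the periodic/degenerate behaviour that irreducibility alone permits and upgrading ``eventually every coordinate is reached'' into a uniform strict contraction over the two-letter alphabet.

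Finally, once $\delta=0$, the monotone sequences $m(\mathbf{y}_n)$ and $M(\mathbf{y}_n)$ squeeze to a common value $c$, and the sandwich $m(\mathbf{y}_n)\mathbf{v}\le\mathbf{y}_n\le M(\mathbf{y}_n)\mathbf{v}$ forces $\mathbf{y}_n\to c\mathbf{v}$, which is the claim. The hypothesis that $\mathbf{x}$ has a nonzero component along $\mathbf{v}$ enters only to guarantee $c\ne 0$, so that the limit is a genuine multiple of $\mathbf{v}$; for strictly positive $\mathbf{x}$ this is automatic since then $m(\mathbf{x})>0\le c$, and a general admissible $\mathbf{x}$ reduces to this case because one application of an interior-mapping block carries it into the positive cone.
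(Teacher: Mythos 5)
Your architecture is genuinely different from the paper's, and structurally it is the better one. You conjugate by $\mathrm{diag}(\mathbf{v})$ (legitimate, since Perron--Frobenius makes $\mathbf{v}$ strictly positive), so that every factor becomes row-stochastic, and you track the oscillation $M(\mathbf{x})-m(\mathbf{x})$ of the ratios $x_i/v_i$ --- a Dobrushin/Birkhoff-type contraction argument. The monotonicity half of this is complete and correct, and it holds for arbitrary real $\mathbf{x}$, not just non-negative ones. The paper instead expands $\mathbf{x}$ in a full eigenbasis of $\mathbf{A}$ and of $\mathbf{B}$ and argues that all non-Perron components decay; that route tacitly assumes diagonalizability, assumes the subdominant eigenvalues satisfy $|\lambda_2|<1$ (which is primitivity, not irreducibility --- a periodic irreducible matrix has further eigenvalues on the unit circle), and handles the cross-basis terms (its ``$\delta a\,\mathbf{v}$ + lower order'' bookkeeping) purely heuristically. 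Your route avoids all three of these problems.

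However, you have left the crux unproved, and it is a genuine gap rather than a routine verification: the passage from ``every partial product $\mathbf{M_1\cdots M_k}$ is irreducible'' to ``there exists $s$ such that every block of $s$ consecutive factors is entrywise positive, with a uniform contraction factor $\lambda<1$.'' This is not a consequence of irreducibility of each block by itself (an irreducible matrix of period $d>1$ has powers that are never positive), nor of primitivity of $\mathbf{A}$ and $\mathbf{B}$ separately (e.g. $\mathbf{A}=\bigl(\begin{smallmatrix}0&1\\1&1\end{smallmatrix}\bigr)$, $\mathbf{B}=\bigl(\begin{smallmatrix}1&1\\1&0\end{smallmatrix}\bigr)$ are primitive, yet $\mathbf{AB}$ is reducible and no power of it is positive). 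What rescues the claim is the hypothesis applied to \emph{all} words, and it needs an actual argument, for instance: the zero-patterns of all finite products form a finite semigroup $S$ under Boolean multiplication; any idempotent $E\in S$ is transitive ($E^2=E$) and strongly connected (irreducible), hence $E$ is the all-ones pattern $J$; by Ramsey's theorem every sufficiently long word has a factor whose pattern is idempotent, hence equals $J$; and since irreducible factors have no zero rows or columns, $J$ is absorbing, so the whole word has pattern $J$. Only after this does your bound $M(\mathbf{y}_n)-m(\mathbf{y}_n)\le\lambda^{\lfloor n/s\rfloor}\bigl(M(\mathbf{x})-m(\mathbf{x})\bigr)$ have a basis. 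One further small repair: your closing claim that one positive block carries a general admissible $\mathbf{x}$ ``into the positive cone'' is false when $\mathbf{x}$ has negative entries; it is also unnecessary, since your convex-combination identity already applies to arbitrary real $\mathbf{x}$, and whether the limiting scalar $c$ vanishes is immaterial to the stated conclusion (the paper's hypothesis of ``a component along $\mathbf{v}$'' is not well defined anyway without fixing a complement to $\mathbf{v}$).
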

\begin{proof}
Let the eigen vectors of matrix $\mathbf{A}$ and $\mathbf{B}$ are $\mathbf{v,v_2,v_3,......v_N}$ and $\mathbf{v,u_2,u_3,.....v_N}$. Then any vector $\mathbf{x}$; having at least one component along vector $\mathbf{v}$, can be expressed as \[\mathbf{x}=a_1\mathbf{v}+a_2\mathbf{v_2}+.......a_N\mathbf{v_N}\] and \[\mathbf{x}=b_1\mathbf{v}+b_2\mathbf{u_2}+......b_N\mathbf{u_N}\]when this vector will pass through matrix $\mathbf{A}$ and $\mathbf{B}$ then it will be \[\mathbf{Ax}=a_1\mathbf{v} + a_2\lambda_2 \mathbf{v_2}+.....+a_N\lambda_N\mathbf{v_N}\] and \[\mathbf{Bx}=b_1\mathbf{v}+b_2\gamma_2\mathbf{u_2}+......+n_N\gamma_N\mathbf{u_N}\] where
$\lambda_2,\lambda_3......\lambda_N$ and $\gamma_2,\gamma_3......\gamma_N$ are eigen values of matrix $\mathbf{A}$ and $\mathbf{B}$ respectively. If it will again pass through any of $\mathbf{A}$ and $\mathbf{B}$, then vector $\mathbf{v}$ will remain as it is and magnitude of all other vectors will decrease because $1>|\lambda_2|>|\lambda_3|....>|\lambda_N|$ and $1> |\gamma_2|>|\gamma_3|......>|\gamma_N|$ (see \cite{nonnegative}). Thus \[\mathbf{BAx} = a_1\mathbf{v} + \delta a \mathbf{v} + L.O.M.O. \mathbf{u_2,u_3,....u_N}\] and \[\mathbf{ABx} = b_2\mathbf{v} + \delta b \mathbf{v} + L.O.M.O \mathbf{v_2,v_3,......v_N}\] $L.O.M.O.$ means "lower order magnitude of". Repeating this operation $k^{th}$ times in any order we will get \[\lim_{k\to\infty}(\mathbf{M_1.M_2.M_3........M_k)x} = c.\mathbf{v}\] 
where $\mathbf{M_i}$ can be $\mathbf{A}$ or $\mathbf{B}$ for all $i$ from 1 to $k$

\end{proof}

\begin{theorem}\label{1}
Let $\mathbf{A}$ and $\mathbf{B}$ be $NXN$ non negative, irreducible matrices with spectral radius '1', and corresponding eigen vector $\mathbf{v}$. Then for any vector $\mathbf{x}$ in $\mathbb{R}^N$
 
\[\lim_{k\to\infty} (\mathbf{A-B})^k \mathbf{x} = \mathbf{0}\]  
\end{theorem}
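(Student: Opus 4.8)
The plan is to recast the assertion as a statement about the spectrum of $\mathbf{C}:=\mathbf{A}-\mathbf{B}$. The one clean structural fact to exploit at the outset is the shared Perron vector: since $\mathbf{A}\mathbf{v}=\mathbf{v}$ and $\mathbf{B}\mathbf{v}=\mathbf{v}$, we get $\mathbf{C}\mathbf{v}=\mathbf{A}\mathbf{v}-\mathbf{B}\mathbf{v}=\mathbf{0}$, so $\mathbf{v}$ lies in the null space of $\mathbf{C}$ and therefore $(\mathbf{A}-\mathbf{B})^{k}\mathbf{x}=(\mathbf{A}-\mathbf{B})^{k}(\mathbf{x}-c\mathbf{v})$ for every scalar $c$ and every $k\ge 1$. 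Only the action of $\mathbf{C}$ on a complement of $\mathrm{span}(\mathbf{v})$ matters, and since a matrix power tends to the zero operator iff its spectral radius is below $1$, the conclusion ``$(\mathbf{A}-\mathbf{B})^{k}\mathbf{x}\to\mathbf{0}$ for all $\mathbf{x}$'' is equivalent to $\rho(\mathbf{A}-\mathbf{B})<1$. So the target is a strict spectral bound on the difference.

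The natural route to that bound is through Lemma~\ref{lemma2}. Expanding the noncommutative power gives
\[
(\mathbf{A}-\mathbf{B})^{k}=\sum_{w\in\{A,B\}^{k}}(-1)^{b(w)}\,\mathbf{M}_{w_{1}}\mathbf{M}_{w_{2}}\cdots\mathbf{M}_{w_{k}},
\]
a signed sum over the $2^{k}$ ordered products, where $b(w)$ is the number of factors equal to $\mathbf{B}$. Each individual product, applied to $\mathbf{x}$, converges to a multiple of $\mathbf{v}$ by Lemma~\ref{lemma2}, and the signs obey $\sum_{w}(-1)^{b(w)}=(1-1)^{k}=0$; this invites the conclusion that the $\mathbf{v}$-directed limits cancel term by term and the whole expression tends to $\mathbf{0}$. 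To make this rigorous I would aim to upgrade Lemma~\ref{lemma2} to a \emph{quantitative} contraction: a single norm, or a common complementary invariant subspace for $\mathbf{A}$ and $\mathbf{B}$, on which both act with one uniform factor $\sigma<1$.

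The hard part, and what I expect to be the genuine obstacle, is exactly this interchange of the limit with a sum whose number of summands grows like $2^{k}$. Lemma~\ref{lemma2} controls each product in isolation, but the deviation of $\mathbf{M}_{w_{1}}\cdots\mathbf{M}_{w_{k}}\mathbf{x}$ from its limiting multiple of $\mathbf{v}$ is only of size $\sigma^{k}$, where $\sigma=\max(|\lambda_{2}|,|\gamma_{2}|)$ is the larger subdominant modulus of $\mathbf{A},\mathbf{B}$; accumulating $2^{k}$ such deviations leaves a residual of order $(2\sigma)^{k}$, which does not vanish once $\sigma>\tfrac12$. Worse, the subdominant modes can reinforce rather than cancel: if $\mathbf{A}$ and $\mathbf{B}$ share a subdominant eigenvector with eigenvalues of opposite sign ($+\sigma$ and $-\sigma$), then along that direction $\mathbf{A}-\mathbf{B}$ acts as multiplication by $2\sigma$, giving $\rho(\mathbf{A}-\mathbf{B})=2\sigma$. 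The binomial cancellation kills the $\mathbf{v}$-component but leaves this off-$\mathbf{v}$ contribution adding up, so the naive term-by-term passage to the limit is not justified and the decay is not automatic from a shared Perron vector alone.

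Consequently the decisive step must be to locate a hypothesis—beyond the common eigenvector $\mathbf{v}$—that forces every subdominant mode of the \emph{difference} to have modulus strictly below $1$, i.e.\ that rules out the reinforcement scenario above. I would look for this in the specific structure of the matrices produced by the aggregation map \eqref{equ4}, where $\mathbf{A}$ and $\mathbf{B}$ are not arbitrary but arise as the linearizations of the update at two nearby trust vectors; the aim would be either to bound $\|\mathbf{A}-\mathbf{B}\|$ in a norm adapted to $\mathbf{v}$, or to show that the two matrices share enough of their spectral decomposition that the off-$\mathbf{v}$ eigenvalues of $\mathbf{A}-\mathbf{B}$ remain inside the unit disk. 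Pinning down and exploiting that extra control is, in my assessment, where the real work of the proof has to go, and an argument resting on Lemma~\ref{lemma2} plus sign cancellation alone will not by itself suffice.
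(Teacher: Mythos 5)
Your analysis is correct, and it is worth stating plainly: the paper's own proof of this theorem is exactly the argument you identified as invalid. The paper expands $(\mathbf{A}-\mathbf{B})^{k-1}$ into signed words in $\mathbf{A},\mathbf{B}$, applies Lemma~\ref{lemma2} to each word separately, ``adds all the combinations'' to conclude $\lim_{k\to\infty}(\mathbf{A}-\mathbf{B})^{k-1}\mathbf{x}\approx b\mathbf{v}$ (its equation~\ref{equ8}), and then multiplies once more by $(\mathbf{A}-\mathbf{B})$ to annihilate $\mathbf{v}$. That is precisely the term-by-term passage to the limit across a sum of $2^{k-1}$ words which you showed is unjustified: each word carries an off-$\mathbf{v}$ residual of order $\sigma^{k}$, there are exponentially many of them, and nothing forces these residuals to cancel rather than reinforce.

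Moreover, your ``reinforcement scenario'' is not merely a gap in the proof; it refutes the theorem as stated. Take
\[
\mathbf{A}=\begin{pmatrix}0.1 & 0.9\\ 0.9 & 0.1\end{pmatrix},\qquad
\mathbf{B}=\begin{pmatrix}0.9 & 0.1\\ 0.1 & 0.9\end{pmatrix}.
\]
Both are nonnegative, irreducible (indeed primitive and symmetric, so even the implicit diagonalizability and eigenvalue-ordering assumptions of Lemma~\ref{lemma2} hold), row-stochastic, hence of spectral radius $1$ with common Perron vector $\mathbf{v}=(1,1)^{T}$; their subdominant eigenvalues are $-0.8$ and $+0.8$ on the shared eigenvector $(1,-1)^{T}$. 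Then
\[
\mathbf{A}-\mathbf{B}=\begin{pmatrix}-0.8 & 0.8\\ 0.8 & -0.8\end{pmatrix}
\]
acts as multiplication by $-1.6$ on $(1,-1)^{T}$, so $(\mathbf{A}-\mathbf{B})^{k}(1,-1)^{T}=(-1.6)^{k}(1,-1)^{T}$ diverges. Note that Lemma~\ref{lemma2} itself is perfectly satisfied by this pair (every word in $\mathbf{A},\mathbf{B}$ contracts the off-$\mathbf{v}$ component by $0.8$ per factor), yet the theorem's conclusion fails; so no repair of the lemma's wording can rescue the deduction. The extra hypothesis you call for---something forcing the subdominant spectrum of the \emph{difference} strictly inside the unit disk---is genuinely necessary and is supplied nowhere in the paper; the defect propagates to Theorem~\ref{0} and thence to the convergence claim of Theorem~\ref{3}, whose matrices $\mathbf{X},\mathbf{Y},\mathbf{Z}$ are constrained only through their Perron data, exactly the information your counterexample shows to be insufficient.
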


\begin{proof}
In  Lemma \ref{lemma2} let $\mathbf{M_i}= \mathbf{A}$ for all $i$,  then
\begin{equation}\label{equ5}
\lim_{k\to\infty} \mathbf{A}^{k-1} \mathbf{x} \approx a_1 \mathbf{v},
\end{equation} 
and if $\mathbf{M_i}= \mathbf{B} $ for all $i$, then
\begin{equation}\label{equ6}
\lim_{k\to\infty} \mathbf{B}^{k-1} \mathbf{x} \approx a_2 \mathbf{v} 
\end{equation}
 if $\mathbf{M_i}$ is taken randomly $\mathbf{A}$ or $\mathbf{B}$, then
 \begin{equation}\label{equ7}
\lim_{k\to\infty}(\mathbf{A.B......B.A...}(k-1) times    )\mathbf{x} \approx a_3 \mathbf{v}
 \end{equation}
 where $a_1, a_2$ and $ a_3 $ are some scalers, adding \ref{equ5}, \ref{equ6} with all combinations of \ref{equ7} will result
 \begin{equation}\label{equ8}
 \lim_{k\to\infty}(\mathbf{A-B})^{k-1} \mathbf{x} \approx b \mathbf{v}
 \end{equation}
here $b$ is a linear combination of $a_1, a_2$ and all $ a_3 $. Now  pre-multiplying equation 
\ref{equ8} by $(\mathbf{A-B})$,  \[\mathbf{(A-B)(A-B)}^{k-1}\mathbf{x} = \mathbf{(A-B)} b\mathbf{v}=(\mathbf{v-v})b=\mathbf{0}\]  
Hence 
\[\lim_{k\to\infty} (\mathbf{A-B})^k \mathbf{x} = \mathbf{0}\]  
\end{proof}

\begin{theorem}\label{0}
Let $\mathbf{A,A_1,A_2......A_m}$ be $NXN$ non negative, irreducible matrices, with spectral radius $1,\lambda_{1},\lambda_{2}......\lambda_{m}$ respectively. Let the corresponding eigen vector for all the above matrices be $\mathbf{v}$. Then for any vector $\mathbf{x}$.
\[\lim_{k\to\infty}(\mathbf{A_1+A_2+.....A_m-A})^k \mathbf{x}=\mathbf{0}\]
if $|\lambda_{1}+\lambda_{2}+......\lambda_{m}-1| < 1$
\end{theorem}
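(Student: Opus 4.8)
The plan is to reduce the $m$-term statement to the two-matrix situation already handled in Theorem \ref{1}, and then to mimic that proof with the eigenvalue bookkeeping made explicit. First I would set $\mathbf{S}=\mathbf{A_1+A_2+\cdots+A_m}$. Being a sum of nonnegative matrices, $\mathbf{S}$ is nonnegative, and it is irreducible since its summands are (adding nonnegative entries cannot destroy irreducibility). Because every $\mathbf{A_i}$ shares the eigenvector $\mathbf{v}$, we get $\mathbf{S v}=(\lambda_1+\lambda_2+\cdots+\lambda_m)\mathbf{v}=\Lambda\mathbf{v}$, where I write $\Lambda=\lambda_1+\cdots+\lambda_m$. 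Since $\mathbf{v}$ is the common positive Perron eigenvector and $\mathbf{S}$ is nonnegative irreducible, the Perron--Frobenius theorem forces $\Lambda$ to be the spectral radius of $\mathbf{S}$. Thus the claim collapses to showing $\lim_{k\to\infty}(\mathbf{S-A})^k\mathbf{x}=\mathbf{0}$, where $\mathbf{S}$ and $\mathbf{A}$ are nonnegative irreducible matrices with the common eigenvector $\mathbf{v}$ and spectral radii $\Lambda$ and $1$, under the hypothesis $|\Lambda-1|<1$.

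Next I would generalize the mechanism of Lemma \ref{lemma2}. There, every ordered product of spectral-radius-$1$ matrices drives an input toward the common direction $\mathbf{v}$, because $\mathbf{v}$ is preserved while the transverse eigencomponents are strictly contracted. The only change here is that each factor also rescales the surviving $\mathbf{v}$-component by its own Perron eigenvalue rather than by $1$. Concretely, for any word $\mathbf{M_1 M_2\cdots M_k}$ with each $\mathbf{M_j}\in\{\mathbf{A_1},\ldots,\mathbf{A_m},\mathbf{A}\}$, the product applied to $\mathbf{x}$ tends to $\big(\prod_j\rho_j\big)\,c\,\mathbf{v}$, where $\rho_j$ is the Perron eigenvalue of $\mathbf{M_j}$ and the transverse part is of lower order, exactly in the ``$L.O.M.O.$'' sense of Theorem \ref{1}.

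Then I would expand $(\mathbf{S-A})^k\mathbf{x}=(\mathbf{A_1+\cdots+A_m-A})^k\mathbf{x}$ as a signed sum over all $(m+1)^k$ ordered words of length $k$ in the letters $\mathbf{A_1},\ldots,\mathbf{A_m},-\mathbf{A}$, where a letter $\mathbf{A_i}$ contributes the factor $\lambda_i$ and the letter $-\mathbf{A}$ contributes the factor $-1$ to the $\mathbf{v}$-component. Summing over all words, the coefficient of $\mathbf{v}$ is precisely the multinomial expansion $(\lambda_1+\lambda_2+\cdots+\lambda_m-1)^k=(\Lambda-1)^k$. Since $|\Lambda-1|<1$ by hypothesis, this coefficient tends to $0$, and the transverse components already decay by the previous step, so $\lim_{k\to\infty}(\mathbf{S-A})^k\mathbf{x}=\mathbf{0}$, which is the assertion. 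As a sanity check, when $m=1$ and $\lambda_1=1$ this recovers Theorem \ref{1}, whose conclusion corresponds to the exponent base $\Lambda-1=0$.

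The step I expect to be the real obstacle is making the eigenvalue bookkeeping of the second paragraph rigorous. Unlike the setting of Lemma \ref{lemma2}, the matrices share only the single eigenvector $\mathbf{v}$ and not a full eigenbasis; in particular their left Perron eigenvectors differ, so the per-word projection constant $c$ is not literally identical for every word. I would therefore need to argue that these constants enter only at leading order in the spectral gap, so that the dominant part of the $\mathbf{v}$-component is governed exactly by $(\Lambda-1)^k$ while all discrepancies, together with the transverse parts, are summably lower order. Controlling this uniformly across the growing number $(m+1)^k$ of words is the delicate quantitative point on which a fully rigorous version of the argument rests.
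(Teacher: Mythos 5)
Your opening reduction is exactly the paper's: the paper likewise forms $\mathbf{M}=\mathbf{A_1}+\mathbf{A_2}+\cdots+\mathbf{A_m}$, notes it is nonnegative and irreducible, and concludes from $\mathbf{Mv}=(\lambda_1+\cdots+\lambda_m)\mathbf{v}$ that its spectral radius is $\lambda=\lambda_1+\cdots+\lambda_m$ (your $\Lambda$). After that the two arguments genuinely part ways. The paper never expands in the original letters $\mathbf{A_1},\ldots,\mathbf{A_m},\mathbf{A}$; instead it splits the single matrix $\mathbf{M}$ as
\[
\mathbf{M}=\frac{\mathbf{M}}{\lambda}+\frac{(\lambda-1)\mathbf{M}}{\lambda}=\mathbf{B}+\mathbf{N},
\]
so that $\mathbf{M}-\mathbf{A}=(\mathbf{B}-\mathbf{A})+\mathbf{N}$, where $\mathbf{B}$ is nonnegative, irreducible, with spectral radius $1$ and eigenvector $\mathbf{v}$ (so Theorem \ref{1} applies directly to $\mathbf{B}-\mathbf{A}$), and $\mathbf{N}$ has spectral radius $|\lambda-1|<1$ (so $\mathbf{N}^k\mathbf{x}\to\mathbf{0}$ with no eigenvector bookkeeping at all). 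It then expands $\big((\mathbf{B}-\mathbf{A})+\mathbf{N}\big)^k$ binomially. This normalization buys precisely the thing you identify as your obstacle: $\mathbf{B}$ and $\mathbf{N}$ are scalar multiples of one and the same matrix, so only two genuinely distinct letters appear, one of them a strict contraction, and no products of differing Perron roots over $(m+1)^k$ words ever need to be tracked. Your route, in exchange, yields something the paper's does not make explicit: the clean decay rate $(\Lambda-1)^k$ for the $\mathbf{v}$-component, obtained by a direct generalization of Lemma \ref{lemma2} rather than by invoking Theorem \ref{1}.

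Your closing self-diagnosis is accurate, and it is worth stating plainly that the difficulty you name is not actually resolved by the paper either; it is only reduced in size. The matrices share the right eigenvector $\mathbf{v}$ but not the left Perron eigenvectors, so ``the component along $\mathbf{v}$'' is not one fixed linear functional across factors, and a limit must be exchanged with a sum containing exponentially many words. In the paper's proof the corresponding steps are the unproved assertion that every mixed word in $\mathbf{N}$ and $(\mathbf{B}-\mathbf{A})$ shrinks in magnitude, and the phrase ``adding all the combinations,'' which sums $2^k$ word-limits without any uniform control. So your proposal is a genuinely different decomposition of essentially the same heuristic strength, not a weaker one; if you wanted to firm it up, adopting the paper's splitting would eliminate the differing-Perron-root bookkeeping (the hardest part of your plan), but the uniform control needed to pass the limit through the growing sum would still have to be supplied — by you or by the paper.
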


\begin{proof}
Let \[\mathbf{M}=(\mathbf{A_1+A_2+.......+A_m})\]
then 
\[\mathbf{M.v} =(\mathbf{A_1+A_2+.......+A_m).v}\]
    \[=(\lambda_{1}+\lambda_{2}+......+\lambda_{m}).\mathbf{v}=\lambda.\mathbf{v}\]
hence $\mathbf{v}$ is also an eigen vector of matrix $\mathbf{M}$ and corresponding eigen value is $ \lambda$. Matrix $\mathbf{M}$ is the sum of non negative, irreducible matrices $\mathbf{A_1,A_2......A_m}$ therefore $\mathbf{M}$ is also non negative and irreducible. So we can conclude that spectral radius of matrix $\mathbf{M}$ is  $ \lambda$.\par Further $\mathbf{M}$ can be written as
 \[\mathbf{M}=\bigg[\frac{\mathbf{M}}{\lambda}+\frac{(\lambda-1)\mathbf{M}}{\lambda}\bigg]\]
    \[=[\mathbf{B} + \mathbf{N}]\]   
here $\mathbf{B}$ is $\mathbf{M}/\lambda$ and $\mathbf{N}$ is $(\lambda-1)\mathbf{M}/\lambda$. Matrix $\mathbf{B}$ and $\mathbf{N}$ are scalar multiple of non negative irreducible matrix $\mathbf{M}$ therefore matrix $\mathbf{B}$ and $\mathbf{N}$  also follow the properties of  non negative irreducible matrices. Hence spectral radius of  matrix $\mathbf{B}$ and $\mathbf{N}$ is '1' and $|\lambda-1|$ respectively and corresponding eigen vector is $\mathbf{v}$.\par If $|\lambda-1|<1$ then 
\begin{equation}\label{9}
\lim_{k\to\infty}\mathbf{N}^k\mathbf{x}=\mathbf{0},
\end{equation}
  and from Theorem \ref{1}  
\begin{equation}\label{10}
\lim_{k\to\infty} (\mathbf{B-A})^k \mathbf{x} = \mathbf{0}.
\end{equation}  
  
In fact when vector $\mathbf{x}$ is passed through any of $\mathbf{N}$ or $(\mathbf{B-A})$ its magnitude decreases, and at $k\to\infty$, it become zero. So in general we can write 
\begin{equation}\label{11}
\lim_{k\to\infty}\mathbf{(M_1.M_2.M_3........M_k)x} = \mathbf{0}
\end{equation}
 where $\mathbf{M_i}$ can be any of $\mathbf{N}$ or $(\mathbf{B-A})$. Adding all the combinations of equation \ref{11} with equation \ref{9} and equation \ref{10}, we will get 
  \[\lim_{k\to\infty}\mathbf{(N+(B-A))}^k\mathbf{x}=\mathbf{0}\] or
   \[\lim_{k\to\infty}(\mathbf{M-A})^k\mathbf{x}=\mathbf{0}\] hence
   \[\lim_{k\to\infty}(\mathbf{A_1+A_2+.....A_m-A})^k \mathbf{x}=\mathbf{0}\]
   if $|\lambda-1|<1$ or $|\lambda_{1}+\lambda_{2}+......\lambda_{m}-1| < 1.$

\end{proof}

\begin{theorem}
\label{2}
let $\mathbf{T}$ be $NXN$ non negative, irreducible matrix then, $\exists $ a positive vector $\mathbf{t}$ such that \[(\mathbf{t})^{1+\alpha}=(\mathbf{D . T^t . t})\] $\mathbf{D}$ is diagonal matrix, with its $i^{th}$ element $d_i$ as $ \bigg[\frac{{\big(\mathbf{e_i C. diag(t) .t}}\big)^\alpha} {{\big(\mathbf{e_i C t}}\big)^{(1+\alpha)}}\bigg].$

\end{theorem}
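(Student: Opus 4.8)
The plan is to read the matrix identity $\mathbf{t}^{1+\alpha}=\mathbf{D\,T^t\,t}$ as the fixed-point equation $\mathbf{t}=\Phi(\mathbf{t})$ for the self-map obtained by taking $(1+\alpha)$-th roots, namely the map of equation \ref{equ4},
\[\Phi_i(\mathbf{t})=\Bigg[\Bigg(\frac{\sum_{j\in S_i}T_{ji}t_j}{\sum_{j\in S_i}t_j}\Bigg)^{p}\Bigg(\frac{\sum_{j\in S_i}t_j^{2}}{\sum_{j\in S_i}t_j}\Bigg)^{q}\Bigg]^{1/(p+q)},\qquad \alpha=q/p.\]
A positive vector satisfying the theorem is exactly a positive fixed point of $\Phi$, so I would establish existence by applying a fixed-point theorem on a well-chosen compact convex set rather than by solving the nonlinear system directly.

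First I would exhibit an invariant order box. Put $\ell=\min\{T_{ij}:T_{ij}>0\}$ and $u=\max_{ij}T_{ij}$, and let $K=[\ell,u]^{N}$. Since $\mathbf{T}$ is irreducible its associated digraph is strongly connected, so every $S_i$ is non-empty; hence all denominators $\sum_{j\in S_i}t_j$ are strictly positive on $K$ and $\Phi$ is continuous there. The first inner ratio is a convex combination (with positive weights $t_j$) of the numbers $\{T_{ji}\}_{j\in S_i}$, so it lies in $[\ell,u]$; the second ratio $\sum_{j\in S_i}t_j^{2}/\sum_{j\in S_i}t_j$ is a convex combination of the $\{t_j\}_{j\in S_i}$, hence lies in $[\min_j t_j,\max_j t_j]\subseteq[\ell,u]$ whenever $\mathbf{t}\in K$. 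Because the outer exponents satisfy $p/(p+q)+q/(p+q)=1$, the value $\Phi_i(\mathbf{t})$ is a weighted geometric mean of two numbers in $[\ell,u]$ and therefore itself lies in $[\ell,u]$. This gives $\Phi(K)\subseteq K$.

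With invariance in hand the conclusion is immediate: $K$ is compact and convex, $\Phi$ is continuous on $K$, so Brouwer's fixed-point theorem yields $\mathbf{t}^{*}\in K$ with $\Phi(\mathbf{t}^{*})=\mathbf{t}^{*}$, and since $\ell>0$ this $\mathbf{t}^{*}$ is strictly positive, which is the asserted vector. This is also the point at which Lemma \ref{lemma1} is relevant: its ordered pair $\mathbf{t}'=\ell\mathbf{e}$, $\mathbf{t}''=u\mathbf{e}$ are natural endpoints, and on the ray one computes $\Phi_i(a\mathbf{e})=\bar T_i^{\,1/(1+\alpha)}a^{\alpha/(1+\alpha)}$ with $\bar T_i=m_i^{-1}\sum_{j\in S_i}T_{ji}\in[\ell,u]$, so that $\Phi(\ell\mathbf{e})\ge\ell\mathbf{e}$ and $\Phi(u\mathbf{e})\le u\mathbf{e}$. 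One could try to close the argument by a monotone iteration between these endpoints, but I would prefer the Brouwer route, because the first inner ratio is a weighted average whose weights are themselves the coordinates of $\mathbf{t}$, and such an average need not be order-preserving in $\mathbf{t}$; full monotonicity of $\Phi$ is therefore delicate, whereas invariance of $K$ is clean. As a sanity check one notes the scaling identity $\Phi(c\mathbf{t})=c^{\alpha/(1+\alpha)}\Phi(\mathbf{t})$, i.e. $\Phi$ is sub-homogeneous of degree $\alpha/(1+\alpha)<1$; this underlies an alternative existence proof, applying Brouwer on the unit simplex to get $\Phi(\mathbf{t}^{*})=\mu\mathbf{t}^{*}$ and then rescaling by $c=\mu^{1+\alpha}$ to obtain a genuine fixed point.

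I expect the main obstacle to be the invariance step $\Phi(K)\subseteq K$ rather than the fixed-point theorem itself. The care lies in verifying that each inner ratio is a bona fide convex combination sandwiched in $[\ell,u]$ and that the outer exponentiation, being a weighted geometric mean with exponents summing to one, preserves that range. It is also precisely here that irreducibility of $\mathbf{T}$ does double duty: it keeps every denominator positive (so $\Phi$ is well defined and continuous), and it keeps the box bounded away from $\mathbf{0}$, guaranteeing that the fixed point produced is positive rather than a degenerate zero solution.
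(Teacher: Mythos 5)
Your proof is correct, but it closes the existence argument by a genuinely different route than the paper. The paper rewrites the relation as $\mathbf{T^t t}=\mathbf{F(t)t}$, where $\mathbf{F(t)}$ carries the scalar $f_i(\mathbf{t})$ of Lemma \ref{lemma1} in the sparsity pattern of $\mathbf{C}$, picks essentially the same two constant vectors you identify ($\mathbf{t'}=\ell\mathbf{e}$ and $\mathbf{t''}=u\mathbf{e}$, on which $f_i$ evaluates to $\ell$ and $u$ by Lemma \ref{lemma1}), notes that $\mathbf{T^t t'}>\mathbf{F(t')t'}$ while $\mathbf{T^t t''}<\mathbf{F(t'')t''}$, and then concludes from continuity of $f$ and a monotone ``path'' between the endpoints that the two sides must cross at some intermediate $\mathbf{t}$ --- a vector-valued intermediate-value argument. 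You instead prove invariance of the box $K=[\ell,u]^N$ under the root-form map $\Phi$ of equation \ref{equ4} (convex combinations for the two inner ratios, a weighted geometric mean with exponents summing to one for the outer power) and invoke Brouwer's fixed-point theorem. What your route buys is rigor and self-containment: the paper's crossing step is not a standard theorem in $\mathbb{R}^N$ (there is no direct vector-valued intermediate value theorem, and the monotonicity invoked from Lemma \ref{lemma1} is only established there for a single pair of constant vectors), whereas invariance-plus-Brouwer is airtight and makes the role of irreducibility explicit --- every $S_i$ is nonempty, so denominators are bounded below by $m_i\ell>0$ and the fixed point inherits positivity from $\ell>0$. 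What the paper's route keeps visible is the two-sided comparison at the endpoints, which your side remark $\Phi(\ell\mathbf{e})\ge\ell\mathbf{e}$, $\Phi(u\mathbf{e})\le u\mathbf{e}$ recovers exactly: your observation is the paper's pair of inequalities transported through the $(1+\alpha)$-th root, so your write-up can fairly be read as the rigorous completion of the paper's sketch.
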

\begin{proof}
Relation $(\mathbf{t})^{1+\alpha}=(\mathbf{D . T^t . t})$ can be written as 
\[\mathbf{T^tt}=\mathbf{D}^{-1}(\mathbf{t})^{1+\alpha}=\mathbf{y}\] where  $y_i=\bigg[\frac {{(\mathbf{e_i C t}})^{(1+\alpha)}}{{(\mathbf{e_i C. diag(t) .t}})^\alpha}\bigg]t_i^{1+\alpha}$.    Further, $y_i$ can be written as 
\[y_i=(\mathbf{e_i C t})\bigg[\frac {{(\mathbf{e_i C t}})^{\alpha}}{{(\mathbf{e_i C. diag(t) .t}})^\alpha}\bigg]t_i^{1+\alpha}\]

\[ =(\mathbf{e_i C t})\bigg[\frac {t_i(\mathbf{e_i C t})}{\mathbf{e_i C. diag(t) .t}}\bigg]^{\alpha}t_i\]
\[=(\mathbf{e_iCt}).f_i(t)\]
\[=\mathbf{e_i}(f_i(t)\mathbf{C}).\mathbf{t}\]
hence vector $\mathbf{y}$ can be written as \[\mathbf{y}=\mathbf{F(t)
.t}\] where matrix $\mathbf{F(t)}$ has nonzero elements at same position as matrix $\mathbf{C}$ and therefore at same position as  $\mathbf{T^t}$, its $ij$ element $F_{ij}(t)$ is $f_i(t)$, 
hence \[\mathbf{T^t.t}= \mathbf{F(t).t}\]   Now, $\exists $ a positive vector $\mathbf{t{'}}$, such that  $f_i(t^{'})\leq min(T_{ij}> 0)$. For such $\mathbf{t^{'}},$
\[\mathbf{T^t.t^{'}}> \mathbf{F(t^{'}).t^{'}}.\]

Also $\exists $ a positive vector $\mathbf{t{''}}$, such that $f_i(t^{''})\geq max(T_{ij})$, For such $\mathbf{t^{''}},$
\[\mathbf{T^t.t^{''}}< \mathbf{F(t^{''}).t^{''}}\] function $f$ is  continuous  and from  Lemma \ref{lemma1} there exist a path from $\mathbf{f(t^{'})}$ to $\mathbf{f(t^{''})}$ such that if  $\mathbf{t^{''}}> \mathbf{t^{'}}$, then $\mathbf{f(t^{''})}>\mathbf{f(t^{'})}.$ Hence, $\exists $ a positive vector $\mathbf{t}$ between $\mathbf{t^{'}}$ and $\mathbf{t^{''}}$, such that 
\[\mathbf{T^t.t}= \mathbf{F(t).t}\]
hence $\exists$ a positive vector $\mathbf{t}$, such that 
\[(\mathbf{t})^{1+\alpha}=(\mathbf{D . T^t . t})\]

\end{proof}

\begin{theorem}
\label{3}
Vector $\mathbf{t}$ in theorem \ref{2} is unique and can be calculated  by iterative function  
\[\mathbf{t^k}=\mathbf{\phi(t^{k-1})}=[\mathbf{D(t^{k-1}).T^t.t^{k-1})}]^\frac{1}{1+\alpha},\] where $\mathbf{t^k}$ is the value of vector $\mathbf{t}$ in $k^{th}$ iteration and $\mathbf{\phi}$ is the iterative function from $ \mathbb{R}^N\rightarrow \mathbb{R}^N $. The error in vector $\mathbf{t}$ will converge by the factor $\frac{1 + \alpha}{\alpha}$ in every iteration.

\end{theorem}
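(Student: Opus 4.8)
The plan is to exhibit the vector $\mathbf{t}$ from Theorem \ref{2} as an attracting fixed point of $\mathbf{\phi}$ and to read off both uniqueness and the convergence factor from a linearization of $\mathbf{\phi}$ at $\mathbf{t}$, feeding the result into Theorem \ref{0}. First I would record the homogeneity of $\mathbf{\phi}$. Writing $a_i=\mathbf{e_i C t}$, $b_i=\mathbf{e_i C\, diag(t)\, t}$ and $c_i=(\mathbf{T^t t})_i$, we have $\phi_i(\mathbf{t})=b_i^{\alpha/(1+\alpha)}c_i^{1/(1+\alpha)}/a_i$, and replacing $\mathbf{t}$ by $\lambda\mathbf{t}$ multiplies $\phi_i$ by $\lambda^{\alpha/(1+\alpha)}$; hence $\mathbf{\phi}$ is positively homogeneous of degree $\alpha/(1+\alpha)<1$. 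By Euler's relation this already forces $\mathbf{t}$ to be an eigenvector of the Jacobian $\mathbf{J}=D\mathbf{\phi}(\mathbf{t})$ with eigenvalue $\alpha/(1+\alpha)$, which is the candidate convergence factor.

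Next I would compute $\mathbf{J}$ explicitly. Differentiating $\ln\phi_i$ and evaluating at the fixed point (where $\phi_i=t_i$) gives
\[ J_{ij}=t_i\bigg(\frac{2\alpha}{1+\alpha}\frac{C_{ij}t_j}{b_i}+\frac{1}{1+\alpha}\frac{C_{ij}T_{ji}}{c_i}-\frac{C_{ij}}{a_i}\bigg), \]
so that $\mathbf{J}=\frac{2\alpha}{1+\alpha}\mathbf{P}+\frac{1}{1+\alpha}\mathbf{Q}-\mathbf{R}$ with $P_{ij}=t_iC_{ij}t_j/b_i$, $Q_{ij}=t_iC_{ij}T_{ji}/c_i$ and $R_{ij}=t_iC_{ij}/a_i$. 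The central computation is to check $\mathbf{Pt}=\mathbf{Qt}=\mathbf{Rt}=\mathbf{t}$ (each row sum collapses $\sum_jC_{ij}t_j^2=b_i$, $\sum_jC_{ij}T_{ji}t_j=c_i$, $\sum_jC_{ij}t_j=a_i$). Since $\mathbf{P},\mathbf{Q},\mathbf{R}$ are nonnegative, share the zero pattern of $\mathbf{T^t}$ (hence are irreducible whenever $\mathbf{T}$ is) and carry the positive eigenvector $\mathbf{t}$, Perron--Frobenius makes their spectral radius equal to $1$.

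I would then cast this into Theorem \ref{0} by taking $\mathbf{A_1}=\frac{2\alpha}{1+\alpha}\mathbf{P}$, $\mathbf{A_2}=\frac{1}{1+\alpha}\mathbf{Q}$ and $\mathbf{A}=\mathbf{R}$, which share the eigenvector $\mathbf{t}$ and have spectral radii $\lambda_1=\frac{2\alpha}{1+\alpha}$, $\lambda_2=\frac{1}{1+\alpha}$ and $1$. As $|\lambda_1+\lambda_2-1|=\alpha/(1+\alpha)<1$, Theorem \ref{0} gives $\lim_{k\to\infty}\mathbf{J}^k\mathbf{x}=\mathbf{0}$ for every $\mathbf{x}$. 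Applying this to the residual $\mathbf{r}_k=\mathbf{t^k}-\mathbf{t}$, whose leading-order recursion is $\mathbf{r}_k=\mathbf{J}\,\mathbf{r}_{k-1}$, shows the iterates converge to $\mathbf{t}$. Moreover, tracking the internal splitting $\mathbf{J}=\mathbf{N}+(\mathbf{B}-\mathbf{A})$ of the proof of Theorem \ref{0}, where $\mathbf{B}-\mathbf{A}$ annihilates $\mathbf{t}$ while $\mathbf{N}$ scales it by $\alpha/(1+\alpha)$, the slowest-decaying surviving mode is the one along $\mathbf{t}$; thus the error is divided by $(1+\alpha)/\alpha$ per iteration (the claimed factor), while the off-$\mathbf{t}$ modes decay strictly faster, matching the earlier remark that convergence is quickest in the first iterations.

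For uniqueness, any solution of Theorem \ref{2} is a fixed point of $\mathbf{\phi}$, and the convergence just established holds from an arbitrary positive start; a second fixed point $\mathbf{t}_2\neq\mathbf{t}$ would remain stationary under the iteration yet be forced to approach $\mathbf{t}$, a contradiction, so $\mathbf{t}$ is unique. I expect the main obstacle to be twofold: first, decomposing $\mathbf{J}$ into three matrices that genuinely share $\mathbf{t}$ as their Perron eigenvector with exactly the spectral radii needed for Theorem \ref{0} to apply verbatim; and second, justifying that $\alpha/(1+\alpha)$ is the \emph{dominant} rate rather than merely one eigenvalue, which rests on showing the $(\mathbf{B}-\mathbf{A})$ part contributes only strictly faster modes. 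The other delicate point is promoting the linear residual recursion from a local to a global statement, which I would shore up using the monotonicity of Lemma \ref{lemma1} together with the bracketing $\mathbf{t'}<\mathbf{t}<\mathbf{t''}$ supplied by Theorem \ref{2}.
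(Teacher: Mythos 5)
Your proposal is correct and takes essentially the same approach as the paper: your decomposition $\mathbf{J}=\frac{2\alpha}{1+\alpha}\mathbf{P}+\frac{1}{1+\alpha}\mathbf{Q}-\mathbf{R}$ is exactly the paper's $\mathbf{Y}+\mathbf{X}-\mathbf{Z}$ (the paper reaches it by perturbing the fixed point and using binomial expansions rather than by differentiating $\ln\phi_i$), and both arguments then feed the same three spectral radii into Theorem \ref{0} to conclude the error decays by the factor $\frac{1+\alpha}{\alpha}$. Your additions (Euler's relation, the explicit uniqueness-by-contradiction, and flagging the local-to-global caveat) slightly exceed the paper's own level of detail but do not change the method.
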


\begin{proof}
Let us rearrange the iterative function $\mathbf{\phi(t^{k-1})}$
\begin{equation}\label{equ9} 
\begin{split}
\mathbf{t^{k}} & = \mathbf{\phi(t^{k-1})} \\
     & = [\mathbf{D(t^{k-1}).T^t.t^{k-1})}]^\frac{1}{1+\alpha}\\
     & = [\mathbf{diag(d_1,d_2....d_N) T^t. t^{k-1}}]^\frac{1}{1+\alpha}    
\end{split}
\end{equation}

where \[d_i = \Bigg(\frac{{\big(\mathbf{e_i C. diag(t^{k-1}) .t^{k-1}}}\big)^\alpha} {{\big(\mathbf{e_i C t^{k-1}}}\big)^{(1+\alpha)}} \Bigg)\] 
 Then $i^{th}$ element of $\mathbf{t^k}$ will be 
 
\begin{equation}\label{equ10}
\begin{split}
 t_i^{k} & =\Bigg[\frac{\mathbf{\big(e_i T^t.t^{k-1}\big)\big(e_i C. diag(t^{k-1}).t^{k-1}\big)}^\alpha}{\big(\mathbf{e_i C t^{k-1}}\big)^{1+\alpha}}\Bigg]^\frac{1}{1+\alpha}\\ 
  & = \Bigg[\frac{\mathbf{\big(e_i T^t.t^{k-1}\big)}^\frac{1}{1+\alpha}\mathbf{\big(e_i C. diag(t^{k-1}).t^{k-1}\big)}^\frac{\alpha}{1+\alpha}}{\mathbf{\big(e_i C t^{k-1}\big)}}\Bigg]
\end{split}
\end{equation}

Let $t_i^k$ and $t_i^{k-1}$ are, far from actual solution $t_i$ by  $\delta t_i^k$ and $\delta t_i^{k-1}$ respectively, then  
\begin{multline*}
t_i + \delta t_i^{k}=  \Bigg[\frac{\mathbf{\big(e_i T^t.(t + \delta t^{k-1})\big)}^\frac{1}{1+\alpha}}{\mathbf{\big(e_i C (t + \delta t^{k-1})\big)}}\Bigg].\\
[\mathbf{\big(e_i C. diag(t + \delta t^{k-1}).(t + \delta t^{k-1})\big)}^\frac{\alpha}{1+\alpha}]
\end{multline*}

\begin{multline*}
=\Bigg[\frac{\mathbf{\big(e_i T^t.t\big)}^\frac{1}{1+\alpha}\mathbf{\big(e_i C. diag(t).t\big)}^\frac{\alpha}{1+\alpha}}{\mathbf{\big(e_i C t\big)}}\Bigg].\Bigg[\frac{(1+\frac{\mathbf{e_i.T^t.\delta t^{k-1}}}{\mathbf{e_i.T^t.t}})^\frac{1}{1+\alpha}}{(1+\mathbf{\frac{e_i.C.\delta t^{k-1}}{e_i.C.t})}}\Bigg].\\\Bigg[\bigg(1+{\frac{2\mathbf{e_i.C. diag(t).\delta t^{k-1}}}{{\mathbf{e_i.C. diag(t).t}}}}+\mathbf{\frac{e_i.C. diag(\delta t^{k-1}).\delta t^{k-1}}{{e_i.C. diag(t).t}}\bigg)}^\frac{\alpha}{1+\alpha}\Bigg]
\end{multline*}
Since $\mathbf{\delta t^{k-1}} << \mathbf{t}$ hence we can neglect the higher order terms of $\mathbf{\delta t^{k-1}}$.
\begin{multline*}
\approx \Bigg[\frac{\mathbf{\big(e_i T^t.t\big)}^\frac{1}{1+\alpha}\mathbf{\big(e_i C. diag(t).t\big)}^\frac{\alpha}{1+\alpha}}{\mathbf{\big(e_i C t\big)}}\Bigg].\Bigg[\frac{(1+\mathbf{\frac{e_i.T^t.\delta t^{k-1}}{e_i.T^t.t})}^\frac{1}{1+\alpha}}{(1+\mathbf{\frac{e_i.C.\delta t^{k-1}}{e_i.C.t})}}\Bigg].\\\Bigg[\bigg(1+\frac{2\mathbf{e_i.C. diag(t).\delta t^{k-1}}}{\mathbf{e_i.C. diag(t).t}}\bigg)^\frac{\alpha}{1+\alpha}\Bigg]
\end{multline*}

Using equation \ref{equ10} 

\begin{multline*}
=t_i. \Bigg[\frac{(1+\mathbf{\frac{e_i.T^t.\delta t^{k-1}}{e_i.T^t.t}})^\frac{1}{1+\alpha}}{(1+\mathbf{\frac{e_i.C.\delta t^{k-1}}{e_i.C.t})}}\Bigg].\\\Bigg[\bigg(1+\frac{2\mathbf{e_i.C. diag(t).\delta t^{k-1}}}{{\mathbf{e_i.C. diag(t).t}}}\bigg)^\frac{\alpha}{1+\alpha}\Bigg]
\end{multline*}
Using binomial expansion and neglecting higher order terms of $\mathbf{\delta  t^{k-1}}$

\begin{multline*}
\approx t_i.\Bigg[\frac{(1+\frac{\mathbf{e_i.T^t.\delta t^{k-1}}}{{(1+\alpha)}\mathbf{e_i.T^t.t}})}{(1+\frac{\mathbf{e_i.C.\delta t^{k-1}}}{\mathbf{e_i.C.t}})}\Bigg].\\\Bigg[\bigg(1+\frac{2\alpha \mathbf{e_i.C. diag(t).\delta t^{k-1}}}{{(1+\alpha)\mathbf{e_i.C. diag(t).t}}}\bigg)\Bigg]
\end{multline*}

\begin{multline*}\
\delta t_i^{k}=t_i.\\\Bigg[\frac{\big(1+\frac{\mathbf{e_i.T^t.\delta t^{k-1}}}{{(1+\alpha)}\mathbf{e_i.T^t.t}}\big)}{\big(1+\frac{\mathbf{e_i.C.\delta t^{k-1}}}{\mathbf{e_i.C.t}}\big)}\bigg(1+\frac{2\alpha \mathbf{e_i.C. diag(t).\delta t^{k-1}}}{(1+\alpha)\mathbf{e_i.C. diag(t).t}}\bigg)\Bigg] - t_i
\end{multline*}

\begin{multline*}\
\delta t_i^{k}=t_i.\\\Bigg[\frac{\big(1+\frac{\mathbf{e_i.T^t.\delta t^{k-1}}}{{(1+\alpha)}\mathbf{e_i.T^t.t}}\big)}{(1+\frac{\mathbf{e_i.C.\delta t^{k-1}}}{\mathbf{e_i.C.t}})}\bigg(1+\frac{2\alpha \mathbf{e_i.C. diag(t).\delta t^{k-1}}}{{(1+\alpha)\mathbf{e_i.C. diag(t).t}}}\bigg) - 1\Bigg]
\end{multline*}

\begin{multline*}\
\hspace{5 mm} = t_i.\\\Bigg[\Bigg(1+\frac{\mathbf{e_i.T^t.\delta t^{k-1}}}{{(1+\alpha)}\mathbf{e_i.T^t.t}}\Bigg)\Bigg(1+\frac{2\alpha \mathbf{e_i.C. diag(t).\delta t^{k-1}}}{{(1+\alpha)\mathbf{e_i.C. diag(t).t}}}\Bigg)\\ - \Bigg(1+\frac{\mathbf{e_i.C.\delta t^{k-1}}}{\mathbf{e_i.C.t}}\Bigg)\Bigg]\Bigg/\Bigg(1+\frac{\mathbf{e_i.C.\delta t^{k-1}}}{\mathbf{e_i.C.t}}\Bigg)
\end{multline*}
Approximating the denominator term $\bigg(1+\frac{\mathbf{e_i.C.\delta t^{k-1}}}{\mathbf{e_i.C.t}}\bigg)\approx 1$

\begin{multline*}\
\hspace{5 mm}\approx t_i.\\\Bigg[\Bigg(\frac{\mathbf{e_i.T^t.\delta t^{k-1}}}{{(1+\alpha)}\mathbf{e_i.T^t.t}}\Bigg)+\Bigg(\frac{2\alpha \mathbf{e_i.C. diag(t).\delta t^{k-1}}}{{(1+\alpha)\mathbf{e_i.C. diag(t).t}}}\Bigg)\\+\Bigg(\frac{\mathbf{e_i.T^t.\delta t^{k-1}}}{{(1+\alpha)}\mathbf{e_i.T^t.t}}\Bigg).\Bigg(\frac{2\alpha \mathbf{e_i.C. diag(t).\delta t^{k-1}}}{{(1+\alpha)\mathbf{e_i.C. diag(t).t}}}\Bigg)\\ - \Bigg(\frac{\mathbf{e_i.C.\delta t^{k-1}}}{\mathbf{e_i.C.t}}\Bigg)\Bigg]
\end{multline*}
Again neglecting higher order terms of $\mathbf{\delta t^{k-1}}$

\begin{multline*}\
\hspace{5 mm}\approx t_i.\\\Bigg[\Bigg(\frac{\mathbf{e_i.T^t.\delta t^{k-1}}}{{(1+\alpha)}\mathbf{e_i.T^t.t}}\Bigg)+\Bigg(\frac{2\alpha \mathbf{e_i.C. diag(t).\delta t^{k-1}}}{{(1+\alpha)\mathbf{e_i.C. diag(t).t}}}\Bigg)\\ - \Bigg(\frac{\mathbf{e_i.C.\delta t^{k-1}}}{\mathbf{e_i.C.t}}\Bigg)\Bigg]
\end{multline*}

\begin{multline*}\
\hspace{5 mm} = \\\Bigg[\Bigg(\frac{t_i.\mathbf{ e_i.T^t}}{{(1+\alpha)}\mathbf{e_i.T^t.t}}\Bigg)+\Bigg(\frac{2\alpha .t_i. \mathbf{e_i.C. diag(t)}}{{(1+\alpha)\mathbf{e_i.C .diag(t).t}}}\Bigg)\\ - \Bigg(\frac{t_i. \mathbf{e_i.C}}{\mathbf{e_i.C.t}}\Bigg)\Bigg].\mathbf{\delta t^{k-1}}
\end{multline*}

\begin{multline*}\
= [\mathbf{X_i}+\mathbf{Y_i}-\mathbf{Z_i}].\mathbf{\delta t^{k-1}}
\end{multline*}
Where $\mathbf{X_i, Y_i}$ and $\mathbf{Z_i}$ are $i^{th}$ row of $NXN$ matrix $\mathbf{X, Y}$ and $\mathbf{Z}$ respectively, it is clear from the  above that
\[ \mathbf{Xt}=\frac{1}{1+\alpha} . \mathbf{t}, \]
\[ \mathbf{Yt}=\frac{2\alpha}{1+\alpha} . \mathbf{t}\] and \[\mathbf{Zt}= 1 . \mathbf{t},\] where,\par \[ \mathbf{X, Y, Z, t} > \mathbf{0} \] Matrices $ \mathbf{X,Y,Z} $ have non zero elements  at same position as matrix $\mathbf{T^t}$, hence all these  are also irreducible. Therefore spectral radius of $\mathbf{X , Y}$ and $\mathbf{Z}$ will be $\frac{1}{1+\alpha}$, $\frac{2\alpha}{1+\alpha}$ and 1. \\Now \[\mathbf{\delta t^k} =(\mathbf{X+Y-Z)\delta t^{k-1}}\] 
If initial error in $\mathbf{t}$ is $\mathbf{\delta t^0}$ then

\[lim_{k\to\infty}\mathbf{\delta t^k} =lim_{k\to\infty}(\mathbf{X+Y-Z})^k\mathbf{\delta t^0} \]
 Directly from Theorem \ref{0}\\
\[ lim_{k\to\infty}(\mathbf{X+Y-Z})^k\mathbf{\delta t^0} = \mathbf{0}\]
 
\[ \Rightarrow lim_{k\to\infty}\mathbf{\delta t^k} =\mathbf{0}\]
if \[|\frac{1}{1+\alpha} + \frac{2\alpha}{1+\alpha} -1| < 1 \]
\[\Rightarrow \frac{\alpha}{1+\alpha} < 1\]
Which is true for any $ \alpha > 0 $ \\

   In every step error will decrease by a factor of $ \frac{1+\alpha}{\alpha}$ . Vector $\mathbf{t}$ will converge fast if value of $\alpha$ is small. Hence  speed of convergence will be 
\[=\frac{1 + \alpha}{\alpha}\]
    
\end{proof}

In theorem \ref{3} if $\alpha=q/p$ then $i^{th}$ element of vector $\mathbf{t}$ will be  \[t_i=\Bigg(\frac{ \big(\mathbf{e_iC. diag(t).t}\big)^\frac{q}{p}}{(\mathbf{e_iCt})^{(1+\frac{q}{p})}} \mathbf{(e_iT^t t)}\Bigg)^\frac{1}{1+\frac{q}{p}}\] 

\[=\Bigg(\bigg(\frac{ \mathbf{e_iC. diag(t).t}}{\mathbf{e_iCt}}\bigg)^q \bigg(\frac{\mathbf{e_iT^t t}}{\mathbf{e_iCt}}\bigg)^p\Bigg)^\frac{1}{p+q}\]

\[t_i=\Bigg[\Bigg(\frac{\sum_{j\in S}{T_{ji}t_j}}{\sum_{j\in S}{t_j}}\Bigg)^p . \Bigg(\frac{\sum_{j\in S}{t_{j}^2}}{\sum_{j\in S}{t_j}}\Bigg)^q\Bigg]^\frac{1}{(p+q)} \]  Which is equation \ref{equ4} hence global trust exists, and can be calculated by above equation.
\section{Analysis of Algorithm} 
  \subsection{Speed of Convergence} 
  Figure \ref{converge} shows the speed of convergence of our algorithm, graph is plotted for the average value of residue of global trust at any node as the iterations performs, i.e.\[\frac{1}{N}||\mathbf{t^k} - \mathbf{t^{k-1}}||_1\] We can see  from the figure that as $\alpha$ is decreasing its speed of convergence is increasing. For $\alpha \leq 1/3$,  it is converging in less then seven iterations. Lower value of $\alpha$ means higher value of $p$ compare to $q$. Higher $p$ means more weightage to first term, which is weighted average of trust value informed one hop neighbors. Lower $q$ means lesser weight to second term which is equivalent global trust of trust assigning set $S$.  But there is trade-off between these two. First term is used to settle the conflicts among direct trust assigning peers, and second term is used to  bias the global trust of peer according to global trust of the trust assigning set. The global trust of individual members of set is biased by their trust assigning set respectively, and so on. Hence second factor is taking the opinion  from the whole of the network. We can not neglect the opinion of other peers but we also need faster convergence of the algorithm. Because higher  the speed of convergence, lesser will be the number of message needed to update the global trust.


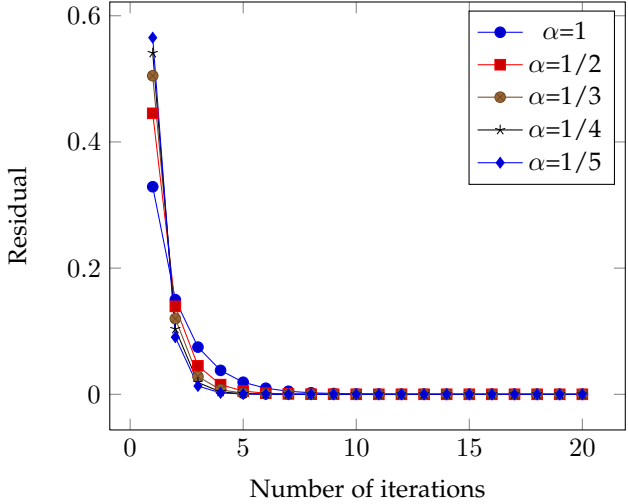
\begin{figure}
  \begin{tikzpicture}
  \begin{axis}[x tick label style={
		/pgf/number format/1000 sep=}, 
    xlabel={Number of iterations},
    ylabel={Residual},
]
 
    \addplot
    coordinates {(	1	,	0.329	)
(	2	,	0.1499	)
(	3	,	0.0748	)
(	4	,	0.0379	)
(	5	,	0.0191	)
(	6	,	0.0096	)
(	7	,	0.0048	)
(	8	,	0.0024	)
(	9	,	0.0012	)
(	10	,	0.0006	)
(	11	,	0.0003	)
(	12	,	0.0001	)
(	13	,	0.0001	)
(	14	,	0	)
(	15	,	0	)
(	16	,	0	)
(	17	,	0	)
(	18	,	0	)
(	19	,	0	)
(	20	,	0	)
    };

    \addplot
    coordinates {
   (	1	,	0.4454	)
(	2	,	0.1393	)
(	3	,	0.0451	)
(	4	,	0.0153	)
(	5	,	0.0051	)
(	6	,	0.0017	)
(	7	,	0.0006	)
(	8	,	0.0002	)
(	9	,	0.0001	)
(	10	,	0	)
(	11	,	0	)
(	12	,	0	)
(	13	,	0	)
(	14	,	0	)
(	15	,	0	)
(	16	,	0	)
(	17	,	0	)
(	18	,	0	)
(	19	,	0	)
(	20	,	0	)

    };
     \addplot
    coordinates {
     (	1	,	0.5049	)
(	2	,	0.1199	)
(	3	,	0.0281	)
(	4	,	0.0072	)
(	5	,	0.0018	)
(	6	,	0.0004	)
(	7	,	0.0001	)
(	8	,	0	)
(	9	,	0	)
(	10	,	0	)
(	11	,	0	)
(	12	,	0	)
(	13	,	0	)
(	14	,	0	)
(	15	,	0	)
(	16	,	0	)
(	17	,	0	)
(	18	,	0	)
(	19	,	0	)
(	20	,	0	)
};
\addplot
    coordinates {(	1	,	0.541	)
(	2	,	0.1035	)
(	3	,	0.0187	)
(	4	,	0.0038	)
(	5	,	0.0008	)
(	6	,	0.0002	)
(	7	,	0	)
(	8	,	0	)
(	9	,	0	)
(	10	,	0	)
(	11	,	0	)
(	12	,	0	)
(	13	,	0	)
(	14	,	0	)
(	15	,	0	)
(	16	,	0	)
(	17	,	0	)
(	18	,	0	)
(	19	,	0	)
(	20	,	0	)
};
\addplot
    coordinates{
(	1	,	0.5653	)
(	2	,	0.0905	)
(	3	,	0.0131	)
(	4	,	0.0022	)
(	5	,	0.0004	)
(	6	,	0.0001	)
(	7	,	0	)
(	8	,	0	)
(	9	,	0	)
(	10	,	0	)
(	11	,	0	)
(	12	,	0	)
(	13	,	0	)
(	14	,	0	)
(	15	,	0	)
(	16	,	0	)
(	17	,	0	)
(	18	,	0	)
(	19	,	0	)
(	20	,	0	)

    };
\legend{$\alpha$=1,$\alpha$=1/2,$\alpha$=1/3,$\alpha$=1/4,$\alpha$=1/5,}
\end{axis}
\end{tikzpicture}
\centering
\caption{Convergence of Algorithm for different values of $\alpha$}\label{converge}
\end{figure}

 \subsection{Implementation in Distributed System}
  Algorithm \ref{algo1} describes how the requesting peers can select the  peer from whom to download. We will call the selected peers as source peers. Each peer can set the $Global\_ref$, a reference value of global trust, to decide whether to select a peer as a  source peer or not. If global trust of any peer is less than $Global\_ref$, then it should not be selected as a  source. The requesting peers initiates a query for a resources. Each query is given a TTL value. Whenever a query is forwarded, its TTL value is decremented. When TTL becomes zero, the query is not forwarded anymore. The requesting peer can control the scope of query by choosing TTL value. A requesting peer will wait for a time greater than 2xTTL. If no response is received within waiting period, the query can be  made again with larger TTL value. After getting the response from the network, a peer can select the most reputed peer as the  source peer and can download the required file. In order to balance the load of the network, a peer can select the set of peers whose global trust is  more than the $Global\_ref$ and then the source peer can be selected  probabilistically among them. The probability of selecting any peer as a  source can be taken to be proportional to its global trust. This strategy has twofold effects, one is to allow only the reputable peer to become the  source,  and another  balancing the load among the reputable peers. After selecting the  source peer and getting the  file from it, a peer can evaluate  quality of file and can send the  feedback to the peers  holding the trust values of source peer. Here it is important to note  that if local trust of any source peer is updated then it will not effect the local trust of other co-source peers, and peer needs to send the updated feedback of only that  source peer whose local trust is being updated. However in normalization methods \cite{eigen}\cite{power}, if local trust value of any source peer is updated then local trust value of all other source peers also have to be updated. Because in normalization method, sum of all the trust values assigned by any peer to all its source peers have to be one. This guarantees the convergence of global trust in normalization method.   Let average number of source peers per peer is $avg\_source$ then in one update of local trust we are saving $avg\_source - 1 $ number of messages, which is very significant for the whole network.  In this process, if all of the responding peers have a global trust less than $Global\_ref$ then all of them can be rejected and  requesting peer can go for another search by increasing the TTL value of query. There should be an upper limit on TTL, after which peer should stop and terminate the query process.\par Global trust can be updated by each trust holder peer according to algorithm \ref{algo2}.  It is similar to update method used in  \cite{eigen}\cite{power}. To calculate the global trust value of any peer, trust holder peer needs to know the local trust values of that peer and the current global trust value of trust assigning peers. Trust assigning peers will send the local trust values of source peers to their trust holder peers and trust holder peers will ask the current global trust values of trust assigning peers  from their respective trust holder peers. This process is  repeated till the convergence of global trust see algorithm \ref{algo2}. For security purpose, more than one peers can manage the global trust of a particular peer. Again here, it is important to note that number of iterations required to converge the global trust will be maximum in first time only. In all successive updates global trust will converge more faster since initial guess of global trust will be more close to the final global trust value. Global trust will converge for any initial value of global trust vector, $\mathbf{t^0}$. But we have to ensure that, at least one component of global trust vector, $\mathbf{t^0}$, must be along the final global trust vector, $\mathbf{t}$, see section 4. We can ensure this by taking initial global trust value for all peers as $(w\_g + w\_b)/2$.\par So far we have discussed that how to aggregate the global trust from local trust and in peer selection procedure we are considering only the global trust. However global trust is more significant if peer has no past history with any of responding peer. If peer has some past history with any one of them then decision of selection of source peer can be done according to $\beta t_i + (1-\beta) T_{ji} $. It is  convex combination of global trust of peer and local trust value assigned by requesting peer to responding peer in past. The value of parameter $\beta$ can be selected by peer depending on its confidence on responding peer.


\begin{algorithm}
\caption{For selection of source peer}\label{algo1}
\begin{algorithmic}[1]
\Procedure{}{}
\State $Global\_ref \gets \frac{(w\_g+w\_b)}{2} $
\State $ TTL  \gets Const.$
\BState \emph{top}:
\State Set $Time\_Counter \geq 2 * TTL$
\State $ i \gets 0 $
\State Send the query for required file in Network;

\While{$i \leq Time\_Counter $ }
  \State Wait for response from the network;
   \State $i\gets i + 1 $;
\EndWhile
\If{$Number\_of\_responding\_peers == 0$}
\If{$TTL \geq (TTL)_{upper}$}
\State Terminate the query process;
\Else 
\State Increase $TTL$;
\State \textbf{goto} top
\EndIf
\Else
\State Get the $Global\_Trust$ of all the responding peers from their trust holder peer;
\State Select the peer with maximum $Global\_Trust$;
\If{$Global\_Trust \geq Global\_ref $}
\State  Download the required file;
\State  Evaluate the file;
\State  Send the feedback to trust holder peer of source peer;
\State Stop; 
\Else
\If{$TTL \geq (TTL)_{upper}$}
\State Terminate the query process;
\Else 
\State Increase $TTL$;
\State \textbf{goto} top
\EndIf
\EndIf
\EndIf
\EndProcedure
\end{algorithmic}
\end{algorithm}

\begin{algorithm}
\caption{For updating the Global Trust of peers}\label{algo2}
\begin{algorithmic}[1]
\State \textbf{Input:} Local Trust values of peers
\State \textbf{Output:} Global trust on trust holder peers
\Procedure{}{}
\For { each peer $i$ }
\ForAll {peer $j$, who is selected as source peer}
\State Evaluate the received file; 
\State Assign the Local Trust value between $w\_b$ to $w\_g$;
\State Send the local Trust to trust holder peer of peer $j$;
\EndFor
\If { Peer $i$ is trust holder peer of peer $k$}

\ForAll{ peer $j$, who  selected $k$ as a source peer }
\State Receive the Local Trust values $T_{kj}$; 
\State Locate their trust holder peer; 
\EndFor
\State Initialization;
\State Set $p$,$q$, $previous\_t_k$, $threshold$;
\While{$ error \geq threshold$}
\State Receive the Global Trust $t_j$ from their trust holder peer ;
\State Compute
\State $t_k \gets \Bigg[\Bigg(\frac{\sum_{j\in S}{T_{jk}t_j}}{\sum_{j\in S}{t_j}}\Bigg)^p . \Bigg(\frac{\sum_{j\in S}{t_{j}^2}}{\sum_{j\in S}{t_j}}\Bigg)^q\Bigg]^\frac{1}{(p+q)}$
\State  $error \gets |t_k - previous\_t_k|$
\State  $ previous\_t_k \gets t_k $
\EndWhile
\EndIf
\EndFor

\EndProcedure
\end{algorithmic}
\end{algorithm}

\section{Experimental Evaluation}

Referring to \cite{sfa} and \cite{eval}, we used NetLogo 5.2\cite{netlogo}, to evaluate the performance of our algorithm. NetLogo is a multi-agent programmable modeling environment where we can model the different agents and can ask them to perform the task in parallel and independently. It is written mostly in Scala, with some parts in Java. We also simulated  and compared our result with two most popular reputation system Eigen Trust\cite{eigen} and Power Trust \cite{power}. We found that our algorithm is giving better performance in various behavioral conditions of peers in the network. It is  explained  in next  subsections.
\subsection{Simulation setup}	

\begin{table}
\begin{center}
\caption{Values of various parameters which we used in our simulation }\label{table2}
\begin{tabular}{ | m{.25cm} | m{1.75cm} | m{4cm}  | m{0.5cm}|} 
 \hline
    S.N.& Parameter & Description & Value\\
   \hline
  1 &$N$ & Number of Peers in Network&100  \\ 
 \hline
  2 &$Num\_file$ & Number of different files in Network & 1000  \\
   \hline
 3&$Num\_transact$ & Total  number of transaction in network &10000 \\
  \hline
 4& $\gamma$ &Zipf's Constant & 0.4  \\ 
  \hline
 5& $p$ &See from equation \ref{equ4} &3 \\
 \hline
 6& $q$& See from equation \ref{equ4} &1 \\
   \hline
  7& $w\_b$ &Weight factor for unsatisfactory file& 1 \\
  \hline
  8&$w\_g$&Weight factor for satisfactory file  & 10 \\
  \hline
  9 & $Global\_ref$ & Threshold value of Global Trust for good peers& 5.5\\
  \hline
   
\end{tabular}
\end{center}
\end{table}
	
 We simulate a typical p2p network with parameters and distribution taken from real world measurements\cite{netmodel}. We used percentage of authentic download as a standard metric to evaluate and compare the performance of reputation systems. In this model, peer can issue a query for particular file. The query propagates in the network and peers can respond to it if they have that particular  file. Peers can ask for only those files which they don't have. Peers can select a source peer according to its global trust and can reject all the possible source peer if none of them are found suitable. \par The parameters used in simulation are shown in Table \ref{table2}. We have taken 100 node network. However number of nodes can be increased upto any number but the results are expected to  remain same irrespective of the number of nodes, because we are taking percentage of authentic download as a standard metric to compare the results. Files are distributed among the nodes as a Zipf's Law with Zipf constant as 0.4. There are 1000 types of different files in network.  We have taken the weight factor for authentic file is 10 and  for malicious file is 1. Observations are taken for 10000 transactions. Rest of the parameters are shown in the  Table \ref{table2}. We have also considered the transient phase and applied the global trust update after every 200 query cycles.
\par Peer behaves as their defined behavior in 95\% of the time and rest of the  time just opposite to it, assuming that 5\% of time all peers make the mistake and behave just opposite to their defined behavior.  There are good peers, malicious peers, unpredictable peers, and collective peers in the network. Behavior of these peers are defined in next subsections.  We have performed simulation in various peers behavioral conditions. Each experiment is performed for ten times and then readings are averaged among them.
\subsection{Performance in the presence of pure malicious peers}
In this model, there are some good and some malicious peers in network. Good peers provide authentic files and right feedback and malicious peers provide inauthentic file and wrong feedback. 95\% of the times peer behaves as above defined behavior and rest of the time just opposite to it. Wrong feedback can be given in many ways \cite{sort}, however we are considering the case in which, malicious peers give highest feedback to other malicious peers and lowest feedback to good peers.  We have increased the percent of  malicious peers in the network from 5\% to 45\% and plotted the result in figure \ref{malicious}. It can be  observed from this figure that percentage of authentic download increase significantly as compare to the Eigen Trust\cite{eigen} and Power Trust\cite{power}. Here we can see that maximum percent of authentic download is 95\% because 5 \% of time peers behave just opposite to their defined behavior.
 
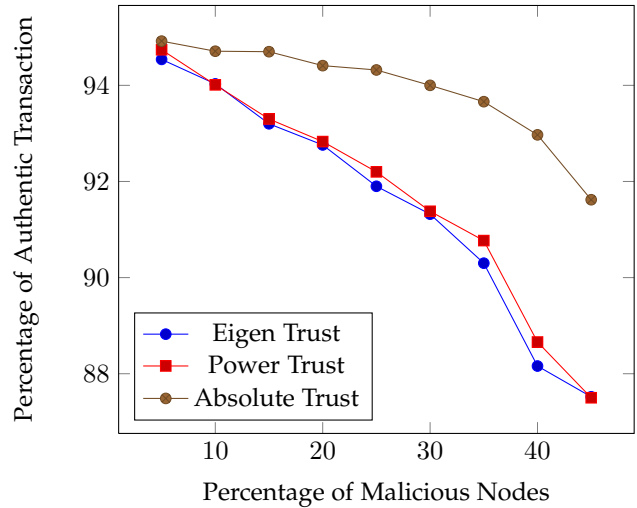
\begin{figure}
\begin{tikzpicture}
\begin{axis}[
	x tick label style={
		/pgf/number format/1000 sep=},
	ylabel= Percentage of Authentic Transaction ,
	xlabel= Percentage of Malicious Nodes,
	legend pos=south west,
]
\addplot 
	coordinates {(5	,94.54)
(10,	94.03)
(15,	93.20)
(20,	92.76)
(25,	91.9)
(30,	91.32)
(35	,	90.30)
(40,	88.16)
(45	,	87.52)

};
\addplot 
	coordinates {(5		,94.74)(10	,	94.01)(15	,	93.3)(20,		92.83)(25	,	92.2)(30	,	91.38)(35	,	90.77)(40	,	88.66)
(45	,	87.5)
};
\addplot
coordinates {(5,	94.92)
(10,	94.71)
(15,    94.7)
(20,	94.41)
(25,	94.32)
(30,	94.00)
(35,	93.66)
(40,	92.97)
(45,	91.62)
};
\legend{Eigen Trust, Power Trust,Absolute Trust}
\end{axis}
\end{tikzpicture}
\caption{good peers gives value to good peers and bad peers give value to bad peers}\label{malicious}
\end{figure}
\subsection{Performance in the presence of peers with unpredictable behavior }
In this model, simulation is performed in the presence of unpredictable peers. These peers behave as a good peers upto some time and earn the reputation, then they start behaving like malicious peers after some time. Since behavior of these peers change dynamically and it is very difficult to predict their behavior, hence they are called unpredictable peers. Initially there are 10\% of purely malicious peers and few unpredictable peers in the network.  We increase the percentage of unpredictable peers from 5\% to 35\% and plotted the result in figure \ref{unpredictable}. We can see from this figure that Absolute Trust perform significantly better. We can also see that power trust is vulnerable to unpredictable malicious peers attack. It is because in some cases unpredictable malicious peers can earn the good reputation upto some time and can be elected as a power node in the network, and then they can start misusing this reputation. Since power nodes play a major role in Power Trust\cite{power} so they can damage the system more powerfully, if wrong nodes are elected as a power nodes. Chances of, unpredictable peers getting elected as a power node is higher when  percentage of malicious nodes are more. This we can observe from figure \ref{unpredictable} that, when \% of unpredictable peers increase upto 25\% the authentic download decreases rapidly. 					
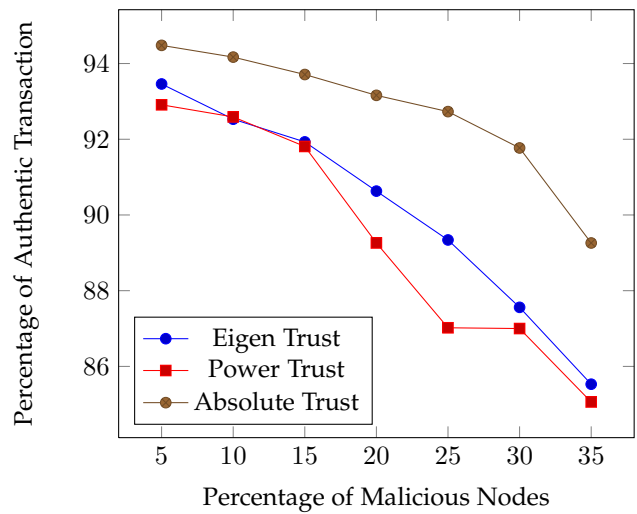
\begin{figure}
\begin{tikzpicture}
\begin{axis}[
	x tick label style={
		/pgf/number format/1000 sep=},
	ylabel= Percentage of Authentic Transaction ,
	xlabel= Percentage of Malicious Nodes,
	legend pos=south west,
]
\addplot 
	coordinates {(	5	,	93.46	)
(	10	,	92.53	)
(	15	,	91.93	)
(	20	,	90.63	)
(	25	,	89.34	)
(	30	,	87.56	)
(	35	,	85.53	)

};
\addplot 
	coordinates {(	5	,	92.91	)
(	10	,	92.59	)
(	15	,	91.81	)
(	20	,	89.26	)
(	25	,	87.02	)
(	30	,	87.0	)
(	35	,	85.06	)

};
\addplot
coordinates {(	5	,	94.48	)
(	10	,	94.17	)
(	15	,	93.71	)
(	20	,	93.16	)
(	25	,	92.73	)
(	30	,	91.77	)
(	35	,	89.26	)

};
\legend{Eigen Trust, Power Trust,Absolute Trust}
\end{axis}
\end{tikzpicture}
\caption{10 \% of peers are purely malicious and few behave as a good peers upto some time then behave maliciously after that}\label{unpredictable}
\end{figure}
\subsection{Performance in the presence of malicious collectives}
In this model, we performed the simulation in the presence of malicious collectives. Malicious collectives are group of peers, those who know each others and increase their reputation values and give minimum values to all others. Malicious collectives always select the source peers from their  group and increase their reputation by giving maximum weight to their file and minimum to all others file. If there are more than one malicious groups then malicious collectives select the source peer among their own group and for other group they behave like pure malicious peer. Practically speaking, percent of peers making malicious collective can not be more than 5\% to 10\%  however there can be many number of malicious groups. Keeping this thing in view we kept 5\% of the peers in one group and number of groups are increased from 1 to 6. Result of simulation is plotted in figure \ref{collective}. We can observe that  Absolute trust is performing better than rest two.
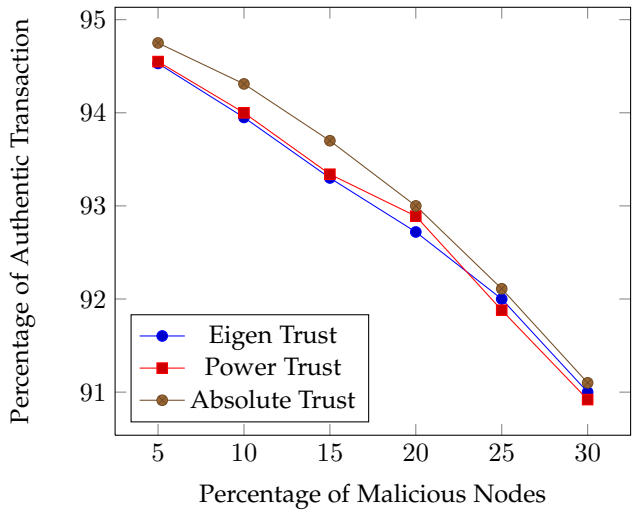
\begin{figure}
\begin{tikzpicture}
\begin{axis}[
	x tick label style={
		/pgf/number format/1000 sep=},
	ylabel= Percentage of Authentic Transaction ,
	xlabel= Percentage of Malicious Nodes,
	legend pos=south west,
]
\addplot 
	coordinates {(	5	,	94.53	)
(	10	,	93.95	)
(	15	,	93.3	)
(	20	,	92.72	)
(	25	,	92	)
(	30	,	91	)

};
\addplot 
	coordinates {(	5	,	94.55	)
(	10	,	94	)
(	15	,	93.34	)
(	20	,	92.89	)
(	25	,	91.88	)
(	30	,	90.92	)

};

\addplot 
	coordinates {(	5	,	94.75	)
(	10	,	94.31	)
(	15	,	93.7	)
(	20	,	93	)
(	25	,	92.11	)
(	30	,	91.1	)

};
\legend{Eigen Trust, Power Trust,Absolute Trust}
\end{axis}
\end{tikzpicture}
\caption{Malicious collectives}\label{collective}
\end{figure}
\subsection{Analysis of load distribution among the peers}
Among all the responding peers, a peer is selected as a source peer if its global trust is higher as compare to others. In the Absolute trust we are calculating the global trust of peers absolutely so it is not reducing the global trust of others, while in relative ranking peers are combating with each other for global trust, if global trust of any peer is increased in some fraction it will decrease the global trust of other peers. Hence relative difference between the global trust, will always be higher in case of normalization  .
\par In simulation we calculated the load of individual peers that is number of times a particular peer is selected as a source peer. Then we calculated the standard deviation of load among all the peers. Simulation is performed in the presence of pure malicious peer, and standard deviation of load is  calculated only among the good nodes   because majority of load have to be shared by good peers only. Malicious peers are increased from 5\% to 30\%. The result of simulation is plotted in Figure \ref{standev}, we can see directly from there, the standard deviation of load distribution is minimum in Absolute trust.
\begin{figure}
\begin{tikzpicture}
\begin{axis}[
	x tick label style={
		/pgf/number format/1000 sep=},
	ylabel= Standard Deviation of Load ,
	xlabel= Percentage of Malicious Nodes,
	legend pos=south east,
]
\addplot 
	coordinates {(	5	,	149	)
(	10	,	149.88	)
(	15	,	150.08	)
(	20	,	154.4	)
(	25	,	156.36	)
(	30	,	158.16	)

};
\addplot 
	coordinates {(	5	,	149.04	)
(	10	,	151.44	)
(	15	,	154.02	)
(	20	,	155.109	)
(	25	,	157.9	)
(	30	,	160.21	)

};
\addplot
coordinates {(	5	,	118.107	)
(	10	,	119.76	)
(	15	,	133.21	)
(	20	,	138.31	)
(	25	,	144.9	)
(	30	,	152.93	)

};
\legend{Eigen Trust, Power Trust,Absolute Trust}
\end{axis}
\end{tikzpicture}
\caption{Standard deviation of load distribution among the peers in different Reputation System}\label{standev}
\end{figure}
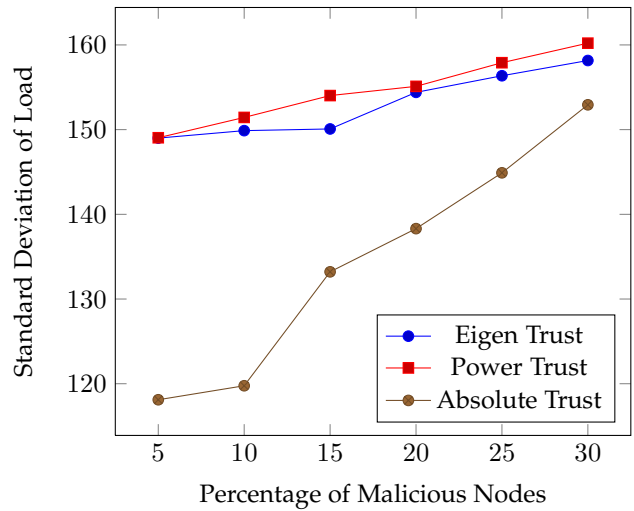

\section{Conclusion and Future work}
In this work, we presented an algorithm for aggregation of local trust  in peer-to peer network. We have seen that our algorithm is able to fulfill all design considerations mentioned in the introduction.  Aggregation is done without normalization, hence it is true reflection of past behavior of peers in network. The calculation of global trust is done recursively and it converges at some unique value. With the suitable choice of parameter $p$ and $q$, (Eq.\ref{equ0}) the algorithm converges much faster. The updates have to be send about only those peers whose local trust value is changing. Hence, lesser number of messages are required to update the global trust. This algorithm can be implemented in truly distributed system where no central authority is present. We have presented the results for simulation  and it shows that this algorithm is robust against the various attacks like individual malicious, unpredictable malicious and collective malicious. Lastly we have shown through simulations that, because peers are not competing with each others for higher value of global trust hence load on individual good peer is more uniform compared to when relative ranking mechanism is used.
\par In future we will look into how to handle the Sybil attack in an  efficient way. The choice of parameter  $p$ and $q$ are very important and a trade-off is involved in it. The appropriate values of $p$ and $q$ need further investigation.

\begin{IEEEbiography}{Sateesh Kumar Awasthi}
He was born in Uttarkashi, India. He is currently pursuing Ph.D in the Department of Electrical Engineering at  IIT, Kanpur. His research interest include Peer-to-Peer Networks, Wireless Sensor Networks, Complex Networks, Social Networks, Solution of non-linear equations, Application of Linear Algebra and Game theory in Networks. 
\end{IEEEbiography}
\begin{IEEEbiography}{Yatindra Nath Singh}
He was born in Delhi, India. He was awarded Ph.D for his work on optical amplifier placement problem in all-optical broadcast networks in 1997 by IIT Delhi. In July 1997, he joined EE Department, IIT Kanpur. He was given AICTE young teacher award in 2003. Currently, he is working as professor. He is fellow of IETE, senior member of IEEE and ICEITE, and member ISOC. He has interests in telecommunications' networks specially optical networks, switching systems, mobile communications, distributed software system design. He has supervised 7 Ph.D and more than 97 M.Tech theses so far. He has filed three patents for switch architectures, and have published many journal and conference research publications. He has also written lecture notes on Digital Switching which are distributed as open access content through content repository of IIT Kanpur. He has also been involved in opensource software development. He has started Brihaspati (brihaspati.sourceforge.net) initiative, an opesource learning management system, BrihaspatiSync – a live lecture delivery system over Internet, BGAS – general accounting systems for academic 
institutes.

\end{IEEEbiography}







\end{document}